\definecolor{shade}{HTML}{F6F6FF}
 \renewenvironment{thebibliography}[1]{%
   \begin{oldthebibliography}{#1}%
     \setlength{\parskip}{0ex}%
     \setlength{\itemsep}{0ex}%
     \fontsize{8.4}{9.4}
     \selectfont
 }%
 {%
   \end{oldthebibliography}%
 }
\newlength{\mylength}
\newenvironment{frameqn}%
{\setlength{\fboxsep}{4pt}
\setlength{\mylength}{\linewidth}%
\addtolength{\mylength}{-2\fboxsep}%
\addtolength{\mylength}{-2\fboxrule}%
\Sbox
\minipage{\mylength}%
$$}%
{$$\endminipage\endSbox
{
\[\fbox{\TheSbox}\]
}}
\newenvironment{frametxt}%
{
\setlength{\fboxsep}{4pt}
\setlength{\mylength}{\linewidth}%
\addtolength{\mylength}{-2\fboxsep}%
\addtolength{\mylength}{-2\fboxrule}%
\Sbox
\minipage{\mylength}%
}%
{\endminipage\endSbox
{
\[\fbox{\TheSbox}\]
}}
\newcommand\hnu{\scalebox{.8}{$\new$}} 
\newcommand\fa{\f{fa}}
\newcommand\ns[1]{\mathsf{#1}} 
\newcommand\supp{\f{supp}}
\newcommand{\denot}[3]{\llbracket #3 \rrbracket_{\scalebox{.6}{$#2$}}^{\hspace{-.1ex}\scalebox{.4}{$#1$}}}
\newbox\tempa
\newbox\tempb
\newdimen\tempc
\def\mud#1{\hfil $\displaystyle{\mathstrut #1}$\hfil}
\def\rig#1{\hfil $\displaystyle{#1}$}
\def\irulehelp#1#2#3{\setbox\tempa=\hbox{$\displaystyle{\mathstrut #2}$}%
		        \setbox\tempb=\vbox{\halign{##\cr
	\mud{#1}\cr
	\noalign{\vskip\the\lineskip}%
	\noalign{\hrule height 0pt}%
	\rig{\vbox to 0pt{\vss\hbox to 0pt{${\; #3}$\hss}\vss}}\cr
	\noalign{\hrule}%
	\noalign{\vskip\the\lineskip}%
	\mud{\copy\tempa}\cr}}%
		      \tempc=\wd\tempb
		      \advance\tempc by \wd\tempa
		      \divide\tempc by 2 }
\def\irule#1#2#3{{\irulehelp{#1}{#2}{#3}%
		     \hbox to \wd\tempa{\hss \box\tempb \hss}}}
\newcommand\den[1]{{\hspace{-.1ex}\scalebox{.5}{$#1$}}}
\newcommand\iden{\den{\interp I}} 
\newcommand\nden{\den{\mathcal N}}
\newcommand\mden{\den{\interp M}}
\newcommand{\ndenot}[2]{\denot{\mathcal N}{#1}{#2}}
\newcommand\interp[1]{\ensuremath{\mathscr #1}}
\newcommand{\freshwedge}[1]{\mbox{$\bigwedge^{\hspace{-.5ex}\raisebox{-.2ex}{\scalebox{.6}{$\# #1$}}}$}}
\newcommand{\freshvee}[1]{\mbox{$\bigvee^{\hspace{-.2ex}\raisebox{-.2ex}{\scalebox{.6}{$\# #1$}}}$}}
\newcommand{\model}[1]{\denot{\interp M}{}{#1}}
\newcommand\deffont[1]{{\bf #1}}
\newcommand\tf[1]{{\jpmb{\mathsf{#1}}}}
\newcommand\f[1]{\mathit{#1}}
\newcommand\act{{\cdot}}
\newcommand\Func{{\Rightarrow}}
\newcommand\limp{\Rightarrow}
\newcommand\Forall[1]{\forall #1.}
\newcommand\Exists[1]{\exists #1.}
\newcommand\id{\f{id}}
\newcommand\cent{\vdash}
\newcommand\ment{\vDash}
\newcommand\sm{{\mapsto}}
\newcommand\ssm{{{:}{=}}}
\newcommand\mone{{\text{-}1}}
\newcommand\rulefont[1]{\ensuremath{\bf (#1)}\xspace}
\newcommand\new{\reflectbox{\ensuremath{\mathsf{N}}}}
\newcommand\atoms{{\mathbb A}}
\newcommand\tneg{{\pmb\neg}}
\newcommand\tbot{{\pmb\bot}}
\newcommand\tand{{\pmb\wedge}}
\newcommand\tall{{\pmb\forall}}
\newcommand\lneg{{\neg}}
\newcommand\lall[1]{\mbox{$\scalebox{.8}{\raisebox{.1ex}{$\bigwedge$}}^{\hspace{-.5ex}\raisebox{-.2ex}{\scalebox{.6}{$\# #1$}}}$}}
\newcommand\lexi[1]{\mbox{$\scalebox{.8}{\raisebox{.1ex}{$\bigvee$}}^{\hspace{-.2ex}\raisebox{-.2ex}{\scalebox{.6}{$\# #1$}}}$}}
\newcommand\ltop{\top}
\newcommand\lbot{\bot}
\renewcommand\land{\wedge}
\renewcommand\lor{\vee}
\spnewtheorem{thrm}{Theorem}[section]{\bfseries}{\itshape}
\spnewtheorem{lemm}[thrm]{Lemma}{\bfseries}{\itshape}
\spnewtheorem{prop}[thrm]{Proposition}{\bfseries}{\itshape}
\spnewtheorem{corr}[thrm]{Corollary}{\bfseries}{\itshape}
\spnewtheorem{nttn}[thrm]{Notation}{\bfseries}{}
\spnewtheorem{defn}[thrm]{Definition}{\bfseries}{}
\spnewtheorem{xmpl}[thrm]{Example}{\bfseries}{}
\spnewtheorem{rmrk}[thrm]{Remark}{\bfseries}{}
\DeclareFontFamily{OT1}{pzc}{}
\DeclareFontShape{OT1}{pzc}{m}{it}%
              {<-> s * [1.100] pzcmi7t}{}
\DeclareMathAlphabet{\mathpzc}{OT1}{pzc}%
                                 {m}{it}
\newcolumntype{L}[1]{>{$}p{#1}<{$}}
\newcolumntype{C}[1]{>{\centering$}p{#1}<{$}}
\newcolumntype{R}[1]{>{\raggedleft$}p{#1}<{$}}
\newcommand\maketab[2]
\newenvironment{#1}
      {\begin{quote}\noindent\begin{tabular}{#2}}
      {\end{tabular}\end{quote}}
    \newenvironment{#1noquote}{\noindent\begin{tabular}{#2}}{\end{tabular}}
\begin{document}
\setlength{\abovedisplayskip}{2pt}%
\setlength{\belowdisplayskip}{2pt}%

\title{Nominal semantics for predicate logic: \\ algebras, substitution, quantifiers, and limits\thanks{%
\tiny{\textbf{This is the author's version.  \href{https://ceur-ws.org/Vol-857/paper_f08.pdf}{Version of Record} (\href{https://web.archive.org/web/20231002232749/https://ceur-ws.org/Vol-857/paper_f08.pdf}{permalink}) was published in pages 104--118 of the \emph{Proceedings of the 9th Italian Convention on Computational Logic} (CILC 2012), edited by Francesca A. Lisi, CEUR Workshop Proceedings Volume 857, ISSN 1613-0073 (\href{https://nbn-resolving.org/urn:nbn:de:0074-857-8}{urn:nbn:de:0074-857-8}), and should be so cited. Copyright is with the authors. 
}}}}
\author{Gilles Dowek \and Murdoch J. Gabbay}
\authorrunning{Dowek and Gabbay}
\institute{}

\maketitle
\vspace{-1.5em}
\begin{abstract}
We define a model of predicate logic in which every term and predicate, open or closed, has an absolute denotation independently of a valuation of the variables.  
For each variable $a$, the domain of the model contains an element $\llbracket a \rrbracket$ which is the denotation of the term $a$ (which is also a variable symbol).
Similarly, the algebra interpreting predicates in the model directly interprets open predicates. 
Because of this models must also incorporate notions of substitution and quantification.

These notions are axiomatic, and need not be applied only to sets of syntax.
We prove soundness and show how every `ordinary' model (i.e. model based on sets and valuations) can be translated to one of our nominal models, and thus also prove completeness.
\vspace{-1ex}
\begin{keywords}
Lattices and algebra; First-order logic; Nominal semantics
\end{keywords}
\end{abstract}

\section{Introduction}

What is a notion of algebraic truth suitable for interpreting predicate logic?

For \emph{propositional} logic an answer is clear: Heyting algebra for the intuitionistic case, and Boolean algebra for the classical case.\footnote{Accessible references for Heyting and Boolean algebra are easily found online.  For a modern and encyclopaedic treatment, see e.g. \cite{vickers:topvl}.} 
This is not sufficient for interpreting predicate logic because predicates can contain variable symbols.

The standard method, due to Tarski \cite{tarski:semct}, uses valuations.
We assume a domain $D$ and a valuation $\varsigma$ mapping variables to $D$, and then terms and predicates are given denotation in the context of the valuation: $\denot{}{\varsigma}{r}$ and $\denot{}{\varsigma}{\phi}$.
Intuitively, valuations convert predicates to propositions, which reduces the problem of truth to the infinite propositional case.

Thus with the usual notion of model, names denote semantic objects, but semantic objects have no inherent notion of names.

We propose an alternative.
Allow semantic objects to be aware of names.
Then, we can map a variable $a$ \emph{to a representation of itself} in the denotation. 
Thus, open terms and open predicates map to open elements of a domain and algebra of truth-values, like so: $\denot{}{}{r}$ and $\denot{}{}{\phi}$.
Connectives, including quantifiers, just become operations on the domain. 
So, the denotation of $a$ is $\denot{}{}{a}$ and the denotation of $\tall a.\phi$ is some operation $\lall a$ applied to $\denot{}{}{\phi}$.

What it means for an element to be open is given for us by the recently-developed theory of \emph{nominal semantics}, which was designed to provide semantics for variable symbols in inductive syntax-with-binding \cite{gabbay:newaas-jv,gabbay:fountl}.
In this paper we apply nominal ideas to semantic objects which, unlike the syntax of terms and predicates, need not be inductively defined.
We use \emph{nominal algebra} to axiomatise substitution following \cite{gabbay:capasn,gabbay:capasn-jv} and a simple theory of \emph{$A\#$fresh limits} in partial orders, which is introduced in this paper.
Using $A\#$fresh limits we show how $\ltop$, $\land$, and the quantifier $\lall{a}$ are all just greatest $A\#$lower fresh bounds for suitable finite sets.
Specifically:
\begin{itemize*}
\item
$\ltop$ is the greatest lower bound for the empty set $\varnothing$.
\item
$x\land y$ is the greatest lower bound for the set $\{x,y\}$, i.e. $\bigwedge\{x,y\}$. 
\item
$\lall{a}x$ is the greatest $\{a\}\#$lower bound for the set $\{x\}$, i.e. $\freshwedge{\{a\}}\{x\}$.
\end{itemize*} 
We use this to construct a nominal algebraic semantics for first-order logic and prove it sound and complete.
The completeness proof is a little unorthodox: instead of a direct proof (which is not hard) we give a more general lifting construction from Tarski-style valuation models, to our nominal models. 
This gives a direct translation of the familiar semantics into our new one and we obtain completeness as a corollary.

\subsubsection*{Map of the paper}

Section~\ref{sect.background} presents the technical background: nominal sets and first-order logic.
Section~\ref{sect.posets} introduces nominal posets and substitution algebras.
This material is partly new (nominal posets and $\freshwedge{A}$) and partly a restatement of \cite{gabbay:capasn,gabbay:capasn-jv} (substitution algebras; though to be fair, the presentation here is significantly simplified and updated).
Section~\ref{sect.model} introduces nominal Boolean algebras, which is a mathematically precise answer to the question with which we opened this Introduction, and proves soundness.
Finally, Section~\ref{sect.complete} presents the `lifting' operation on models and proves completeness.

\section{Technical background}
\label{sect.background}

\subsection{Background on nominal sets}

Nominal sets were introduced in \cite{gabbay:newaas-jv} (where they were called \emph{equivariant Fraenkel-Mostowski sets}).
The presentation below is quite self-contained; see \cite{gabbay:newaas-jv} or \cite{gabbay:fountl} for more details.

We want names to `exist' in the denotation.
We do this by endowing sets with a symmetry group action over name permutations (Definition~\ref{defn.fin.supp}). 
We extract from this the key idea of \emph{support} (Definition~\ref{defn.support}), which captures the degree of asymmetry of an element under the group action.
We use this to define a \emph{nominal set} $\ns X$ (Definition~\ref{defn.nominal.set}) as a set with a group action all of whose elements $x$ are at most finitely asymmetric (have finite support), where $a\#x$ and $a\not\in\supp(x)$ mean the same thing: ``$x$ is symmetric over/does not depend on the name $a$''.
We do not \emph{a priori} assume e.g. a substitution action on names;
in our framework this is algebraically definable just from the more primitive notion of permutable names and finite asymmetry (see e.g. Figure~\ref{fig.nom.sigma}).

\begin{defn}
\label{defn.atoms}
Fix a countably infinite set of \deffont{atoms} $\mathbb A$.
We use a \deffont{permutative convention} that $a,b,c,\ldots$ range over \emph{distinct} atoms.
\end{defn}

\begin{defn}
A \deffont{(finite) permutation} $\pi$ is a bijection on the set $\atoms$ such that $\f{nontriv}(\pi)=\{a\mid \pi(a)\neq a\}$ is finite.

Write $\id$ for the \deffont{identity} permutation such that $\id(a)=a$ for all $a$.
Write $\pi'\circ\pi$ for composition, so that $(\pi'\circ\pi)(a)=\pi'(\pi(a))$.
Write $\pi^\mone$ for inverse, so that $\pi^\mone\circ\pi=\id=\pi\circ\pi^\mone$.
Write $(a\;b)$ for the \deffont{swapping} (terminology from \cite{gabbay:newaas-jv}) mapping $a$ to $b$,\ $b$ to $a$,\ and all other $c$ to themselves, and take $(a\;a)=\id$.
\end{defn}

\begin{defn}
\label{defn.fin.supp}
A \deffont{set with a permutation action} $\ns X$ is a pair $(|\ns X|,\act)$ of an \deffont{underlying set} $|\ns X|$ and a \deffont{permutation action} written $\pi\act x$ 
 which is a group action on $|\ns X|$, so that $\id\act x=x$ and $\pi\act(\pi'\act x)=(\pi\circ\pi')\act x$ for all $x\in|\ns X|$ and permutations $\pi$ and $\pi'$.
\end{defn}

\begin{defn}
\label{defn.support}
Say that $A\subseteq\mathbb A$ \deffont{supports} $x\in|\ns X|$ when for every $\pi$, if $\pi(a)=a$ for every $a\in A$ then $\pi\act x=x$.
If some finite $A$ supporting $x$ exists, then call $x$ \deffont{finitely-supported}.
Then define the \deffont{support} of $x$ by
$$
\supp(x)=\bigcap\{A\mid A\text{ finite},\ A\text{ supports }x\} .
$$
Write $a\#x$ as shorthand for $a\not\in\supp(x)$ and read this as $a$ is \deffont{fresh for} $x$.
\end{defn}

\begin{thrm}
\label{thrm.supp}
Suppose $\ns X$ is a set with a permutation action and $x\in|\ns X|$.
Then if $x$ has finite support then 
$\supp(x)$ supports $x$ and is the unique least finite supporting set of $x$.
\end{thrm}
\begin{proof}
See \cite[Theorem~2.21]{gabbay:fountl}.
A different but equivalent formulation of the result is in \cite{gabbay:newaas-jv}.
\end{proof}

\begin{frametxt}
\begin{defn}
\label{defn.nominal.set}
Suppose $\ns X$ is a set with a permutation action.

Call a set with a permutation action $\ns X$ a \deffont{nominal set} when every $x\in|\ns X|$ has finite support.
$\ns X$, $\ns Y$, $\ns Z$ will range over nominal sets.
\end{defn}
\end{frametxt}

\begin{xmpl}
\label{xmpl.pow}
\begin{itemize*}
\item
$\mathbb A$ is a nominal set where $\pi\act a=\pi(a)$.
It is easy to check that $\supp(a)=\{a\}$.
\item
The sets of terms and predicates from Definition~\ref{defn.terms} below, are nominal sets if we let permutation act in the natural way on the atoms (considered as variable symbols) inside them.
It is easy to check that $\supp(r)$ or $\supp(\phi)$ is equal to the atoms occurring in $r$ or $\phi$ if we \emph{do not} take terms up to $\alpha$-equivalence, and $\supp(r)$ or $\supp(\phi)$ is equal to the atoms occurring \emph{free} in $r$ or $\phi$ if we \emph{do} take $r$ or $\phi$ up to $\alpha$-equivalence.
A detailed study of this is in \cite[Section~5]{gabbay:fountl}; see also \cite{crole:alpee}.
\end{itemize*}
Later on in this paper we shall see more examples.
Notably, the sets $X^\bullet$ from Definition~\ref{defn.idenot.beta} are nominal sets, and the set of \emph{valuations} from Definition~\ref{defn.varsigma} is a set with a permutation action but not a nominal set (see Remark~\ref{rmrk.not.nominal} and associated footnote).
For more examples see \cite{gabbay:fountl}.
\end{xmpl}

\begin{defn}
\label{defn.NOM}
Call a function $f\in |\ns X|\to|\ns Y|$ \deffont{equivariant} when $\pi\act f(x)=f(\pi\act x)$ for all permutations $\pi$ and $x\in|\ns X|$.
In this case write $f:\ns X\to\ns Y$.
\end{defn}


\begin{prop}
\label{prop.pi.supp}
$\supp(\pi\act x)=\{\pi(a)\mid a\in\supp(x)\}$. 
\end{prop}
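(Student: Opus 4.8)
The plan is to prove the two inclusions $\supp(\pi\act x)\subseteq\{\pi(a)\mid a\in\supp(x)\}$ and $\{\pi(a)\mid a\in\supp(x)\}\subseteq\supp(\pi\act x)$ separately, and in fact to derive the second from the first by a symmetry argument. Throughout I would rely on Theorem~\ref{thrm.supp}, which guarantees that $\supp(x)$ is itself the least finite supporting set for $x$, and on the group-action identities from Definition~\ref{defn.fin.supp}.

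First I would show that the finite set $\pi(\supp(x))=\{\pi(a)\mid a\in\supp(x)\}$ supports $\pi\act x$. Following Definition~\ref{defn.support}, let $\sigma$ be any permutation fixing every element of $\pi(\supp(x))$, so that $\sigma(\pi(a))=\pi(a)$ for all $a\in\supp(x)$. Set $\tau=\pi^\mone\circ\sigma\circ\pi$. Then for each $a\in\supp(x)$ we have $\tau(a)=\pi^\mone(\sigma(\pi(a)))=\pi^\mone(\pi(a))=a$, so $\tau$ fixes $\supp(x)$ pointwise; since $\supp(x)$ supports $x$ (Theorem~\ref{thrm.supp}) this gives $\tau\act x=x$. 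Unwinding with the group-action laws, $\sigma\act(\pi\act x)=(\sigma\circ\pi)\act x=(\pi\circ\tau)\act x=\pi\act(\tau\act x)=\pi\act x$, as required. Because $\pi(\supp(x))$ is finite (being the image of a finite set under a bijection) and $\pi\act x$ is finitely supported (as $\ns X$ is nominal), Theorem~\ref{thrm.supp} yields $\supp(\pi\act x)\subseteq\pi(\supp(x))$.

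For the reverse inclusion I would instantiate the inclusion just proved at $\pi^\mone$ and $\pi\act x$ in place of $\pi$ and $x$, obtaining $\supp(\pi^\mone\act(\pi\act x))\subseteq\pi^\mone(\supp(\pi\act x))$. But $\pi^\mone\act(\pi\act x)=(\pi^\mone\circ\pi)\act x=\id\act x=x$, so this reads $\supp(x)\subseteq\pi^\mone(\supp(\pi\act x))$; applying $\pi$ to both sides gives $\pi(\supp(x))\subseteq\supp(\pi\act x)$. The two inclusions together give the equality.

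The only real subtlety, and the step I would be most careful about, is the conjugation computation together with the direction of composition: one must track the convention $\pi\act(\pi'\act x)=(\pi\circ\pi')\act x$ faithfully, so that the conjugate $\tau=\pi^\mone\circ\sigma\circ\pi$ comes out fixing $\supp(x)$ itself rather than its image $\pi(\supp(x))$. Everything else is routine bookkeeping once this point is handled correctly.
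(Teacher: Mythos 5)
Your proof is correct and follows essentially the same route as the paper: the paper's proof consists of the observation that $A$ supports $x$ if and only if $\{\pi(a)\mid a\in A\}$ supports $\pi\act x$, combined with the minimality of support from Theorem~\ref{thrm.supp}, which is exactly what you establish. Your write-up simply fills in the details the paper leaves implicit (the conjugation computation $\tau=\pi^\mone\circ\sigma\circ\pi$ for one direction, and the instantiation at $\pi^\mone$ and $\pi\act x$ for the other), and both steps are carried out correctly under the paper's composition convention.
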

\begin{proof}
It is not hard to check that $A$ supports $x$ if and only if $\{\pi(a)\mid a\in A\}$ supports $\pi\act x$.
We use Theorem~\ref{thrm.supp}.
\end{proof}

\begin{corr}
\label{corr.stuff}
\begin{enumerate*}
\item
If $\pi(a)=a$ for all $a\in\supp(x)$ then $\pi\act x=x$.
\item
If $\pi(a)=\pi'(a)$ for every $a\in\supp(x)$ then $\pi\act x=\pi'\act x$.
\item
$a\#x$ if and only if $\Exists{b}b\#x\wedge (b\;a)\act x=x$.
\end{enumerate*}
\end{corr}
\begin{proof}
Parts~1 and~2 follow from the fact that $\supp(x)$ supports $x$ (Theorem~\ref{thrm.supp}).
Part~3 follows using Proposition~\ref{prop.pi.supp}.
\end{proof}

\subsection{First-order logic}

First-order logic is of course standard.
The reader can find any number of presentations in the literature, for instance \cite{dalen:logs}.

\begin{defn}
Here and for the rest of this paper fix a signature 
$$
\Sigma=(\f{TermFormers},\f{PredicateFormers},\f{arity})
$$ 
of \deffont{term-formers} and \deffont{predicate-formers} and an \deffont{arity} function $\f{arity}$ mapping $\f{TermFormers}\cup\f{PredicateFormers}$ to $\mathbb N=\{0,1,2,\dots\}$.
We let $\tf f$ range over distinct term-formers and $\tf P$ range over distinct predicate-formers.
\end{defn}

\begin{defn}
\label{defn.terms}
Define \deffont{terms} $r$ and \deffont{predicates} $\phi$ inductively by:
\begin{frameqn}
\begin{array}{r@{\ }l}
r::=& a \mid \tf f(r_1,\dots,r_{\f{arity}(\tf f)})
\\
\phi::=& \tbot \mid \tf P(r_1,\dots,r_{\f{arity}(\tf P)}) \mid \phi\tand \phi \mid \tneg \phi \mid \tall a.\phi
\end{array}
\end{frameqn}
Define \deffont{free atoms} inductively by:
$$
\begin{array}{r@{\ }l@{\quad}r@{\ }l@{\quad}r@{\ }l}
\fa(a)=&\{a\}
&
\fa(\tf f(r_1,\dots,r_n))=&\bigcup_i \fa(r_i)
\\
\fa(\tbot)=&\varnothing
&
\fa(\tf P(r_1,\dots,r_n))=&\bigcup_i \fa(r_i)
\\
\fa(\tneg\phi)=&\fa(\phi)
&
\fa(\phi_1\tand\phi_2)=&\fa(\phi_1)\cup\fa(\phi_2)
&
\fa(\tall a.\phi)=&\fa(\phi){\setminus}\{a\}
\end{array}
$$
\end{defn}

\begin{defn}
We take predicates $\phi$ up to $\alpha$-equivalence as usual (so for instance $\tall a.\tf P(a)=\tall b.\tf P(b)$ where $\f{arity}(\tf P)=1$).

We write $r[a\ssm s]$ and $\phi[a\ssm s]$ for the usual capture-avoiding substitution on terms and predicates.
For instance, $\tf f(a)[a\ssm b]=\tf f(b)$ and $(\tall b.\tf P(a))[a\ssm b]=\tall b'.\tf P(b)$.
\end{defn}

\begin{defn}
Let $\Phi$ and $\Psi$ range over finite sets of predicates.
A \deffont{sequent} is a pair $\Phi\cent\Psi$.
We define the \deffont{derivable sequents} as usual for classical first-order logic by the rules in Figure~\ref{fig.deriv}.
\end{defn} 

\begin{figure}[t]
\begin{minipage}{\textwidth}
$$
\begin{gathered}
\begin{prooftree}
\phantom{h}
\justifies
\Phi,\tbot\cent\Psi
\using\rulefont{\tbot L}
\end{prooftree}
\qquad
\begin{prooftree}
\phi_1,\phi_2,\Phi\cent\Psi
\justifies
\phi_1\tand\phi_2,\Phi\cent\Psi
\using\rulefont{\tand L}
\end{prooftree}
\qquad
\begin{prooftree}
\Phi\cent\psi_1,\Psi\quad
\Phi\cent\psi_2,\Psi
\justifies
\Phi\cent\psi_1\tand\psi_2,\Psi
\using\rulefont{\tand R}
\end{prooftree}
\\[1ex]
\begin{array}{c@{\qquad}c}
\begin{prooftree}
\Phi\cent\phi,\Psi
\justifies
\Phi,\tneg\phi\cent\Psi
\using\rulefont{\tneg L}
\end{prooftree}
&
\begin{prooftree}
\Phi,\phi\cent\Psi
\justifies
\Phi\cent\tneg\phi,\Psi
\using\rulefont{\tneg R}
\end{prooftree}
\\[4ex]
\begin{prooftree}
\Phi,\phi[a\ssm s]\cent\Psi
\justifies
\Phi,\tall a.\phi\cent\Psi
\using\rulefont{\tall L}
\end{prooftree}
&
\begin{prooftree}
\Phi\cent\psi,\Psi\quad (a\not\in\fa(\Phi,\Psi))
\justifies
\Phi\cent\tall a.\psi,\Psi
\using\rulefont{\tall R}
\end{prooftree}
\end{array}
\end{gathered}
$$
\end{minipage}
\caption{Derivability in first-order classical logic} 
\label{fig.deriv}
\end{figure}

\section{Partial orders and substitution}
\label{sect.posets}

We obtain our nominal semantics for first-order logic by combining two things: a partial order, and a substitution action.
We need the partial order to interpret entailment.
We need the substitution action to interpret variables and substitutions. 
These are Subsections~\ref{subsect.nominal.poset} and~\ref{subsect.sigma}.

Of course, these two things interact via quantifiers.
Thus we introduce the novel notion of $\freshwedge{A}X$, an $A\#$greatest lower bound for a set $X$ (Notation~\ref{nttn.Afresh.glb}).
It turns out that this generalises $\lbot$, $\land$, and $\lall{a}$ into a single definition.

\subsection{Nominal posets (nominal partially ordered set)}
\label{subsect.nominal.poset}

We can now begin to ask what happens if we combine the notion of partially ordered set with the notion of nominal set.
In particular, we can think about greatest lower and least upper bounds \emph{in the presence} of nominal freshness.
This leads us to the idea of $\freshwedge{A} X$ and $\freshvee{A} X$; greatest lower and least upper bounds amongst elements that do not include $A$ in their support. 

\begin{defn}
A \deffont{nominal poset} is a tuple $\mathcal L=(|\mathcal L|,\act,\leq)$ such that $(|\mathcal L|,\act)$ is a nominal set and $\leq\subseteq|\mathcal L|\times|\mathcal L|$ is an equivariant partial order.\footnote{So $x\leq y$ if and only if $\pi\act x\leq\pi\act y$.}

Call $\mathcal L$ \deffont{finitely fresh-complete} when for every finite subset $X\subseteq|\mathcal L|$ and every finite set of atoms $A\subseteq\mathbb A$ the set of $A$-fresh lower bounds
$$
\{x'\in|\mathcal L| \mid A\cap\supp(x')=\varnothing\ \wedge\ \Forall{x{\in} X}x'{\leq} x\}
$$
has a $\leq$-greatest element $\freshwedge{A}X$. 

Similarly call $\mathcal L$ \deffont{finitely fresh-cocomplete} when for every finite $X$ and $A$ as above the set of $A$-fresh upper bounds 
$\{x'\in|\mathcal L| \mid A\cap\supp(x')=\varnothing\ \wedge\ \Forall{x{\in} X}x{\leq} x'\}$
has a $\leq$-least element $\bigvee^{\#A}X$.
\end{defn}

\begin{lemm}
$\freshwedge{A}X$ and
$\freshvee{A}X$ are unique if they exist.
\end{lemm}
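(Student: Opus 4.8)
The plan is to reduce both claims to the elementary fact that greatest (respectively least) elements of a poset are unique, and to observe that the freshness side-condition plays no role in the uniqueness argument beyond fixing \emph{which} set we are taking the greatest/least element of. Concretely, recall that $\freshwedge{A}X$ is by definition the $\leq$-greatest element of the set of $A$-fresh lower bounds
$$
S \;=\; \{x'\in|\mathcal L| \mid A\cap\supp(x')=\varnothing\ \wedge\ \Forall{x{\in} X}x'{\leq} x\},
$$
so the statement ``$\freshwedge{A}X$ is unique if it exists'' is just the statement that $S$ has at most one $\leq$-greatest element.

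The key step is then a two-line antisymmetry argument. Suppose $y$ and $y'$ both satisfy the defining property of $\freshwedge{A}X$, i.e. both lie in $S$ and both are $\leq$-above every element of $S$. Since $y\in S$ and $y'$ is a greatest element of $S$, we get $y\leq y'$; symmetrically, since $y'\in S$ and $y$ is greatest, we get $y'\leq y$. Because $\leq$ is a partial order and hence antisymmetric (part of the hypothesis that $\mathcal L$ is a nominal poset), $y\leq y'$ and $y'\leq y$ give $y=y'$. The argument for $\freshvee{A}X$ is exactly dual, working with the set of $A$-fresh \emph{upper} bounds and its $\leq$-least element.

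I do not expect any genuine obstacle here: the equivariance of $\leq$ and the nominal structure are irrelevant to uniqueness, which rests solely on antisymmetry. The only thing worth being careful about is to make explicit that ``$\freshwedge{A}X$'' is a definite description naming the greatest element of $S$, so that comparing two candidate values is legitimate; once that is spelled out, the proof is the standard uniqueness-of-extrema fact applied twice, once for each of $\freshwedge{A}X$ and $\freshvee{A}X$.
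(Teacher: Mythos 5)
Your proof is correct and matches the paper's own argument, which likewise reduces uniqueness to the antisymmetry of the partial order ($x\leq y$ and $y\leq x$ imply $x=y$). You simply spell out in more detail what the paper states in one line, so there is no substantive difference in approach.
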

\begin{proof}
From the fact that for a partial order, $x\leq y$ and $y\leq x$ imply $x=y$.
\end{proof}

\begin{nttn}
\label{nttn.Afresh.glb}
Suppose $\mathcal L=(|\mathcal L|,\act,\leq)$ is a finitely fresh-complete and -cocomplete nominal poset. 
Suppose $X\subseteq|\mathcal L|$ and $A\subseteq\atoms$ are finite.
Then:
\begin{itemize*}
\item
Call $\freshwedge{A}X$ an \deffont{$A\#$greatest lower bound} (\emph{`$A$-fresh greatest lower bound'}) of $X$.
If $A=\varnothing$ then we may just call this a greatest lower bound.
If $A=\{a\}$ we may just call this an $a\#$greatest lower bound and write it $\freshwedge{a}X$. 
\item
Call $\freshvee{A}X$ an \deffont{$A\#$least upper bound} of $X$.
If $A=\varnothing$ then we may just call this a least upper bound.
If $A=\{a\}$ we may just call this an $a\#$least upper bound and write it $\freshvee{a}X$. 
\end{itemize*}
\end{nttn}

\begin{nttn}
\label{nttn.lall}
Suppose $\mathcal L=(|\mathcal L|,\act,\leq)$ is a nominal poset.
\begin{itemize*}
\item
Write $\ltop$ for the greatest lower bound and $\lbot$ for the least upper bound of the empty set $\varnothing$, where these exist.
\item
Write $x\land y$ for the greatest lower bound and $x\lor y$ for the least upper bound of $\{x,y\}$, where these exist. 
\item
Write $\lall{a}x$ for the $a\#$greatest lower bound and $\lexi{a}x$ for the $a\#$least upper bound of $\{x\}$, where these exist.
\end{itemize*}
\end{nttn}

\begin{defn}
Suppose $\mathcal L$ is a partial order (it does not matter whether it is nominal).
Call $x'\in|\mathcal L|$ a \deffont{complement} of $x\in|\mathcal L|$ when $x\land x'=\lbot$ and $x\lor x'=\ltop$.
If every $x\in|\mathcal L|$ has a complement say that $\mathcal L$ \deffont{has complements}, and write the complement of $x$ as $\lneg x$.
\end{defn}

\begin{lemm}
Complements are unique if they exist.
\end{lemm}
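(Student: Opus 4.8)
The plan is to take two complements $x'$ and $x''$ of the same element $x$ and prove $x'=x''$ by the short equational computation familiar from the classical proof that complements in a distributive lattice are unique. All of this is purely lattice-theoretic, which is exactly what the parenthetical ``it does not matter whether it is nominal'' signals: none of the nominal apparatus (support, freshness, the group action) plays any role.

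First I would expand $x'$ against the top element. Since $x''$ is a complement of $x$ we have $x\lor x''=\ltop$, and since $\ltop$ is the greatest lower bound of $\varnothing$ it is the greatest element, so $z\land\ltop=z$ for every $z$. Hence $x'=x'\land\ltop=x'\land(x\lor x'')$. The decisive move is then to distribute the meet over the join, obtaining $(x'\land x)\lor(x'\land x'')$. Using that $x'$ is a complement of $x$, so $x\land x'=\lbot$ (and $\land$ is commutative, being the greatest lower bound of an unordered pair), the first disjunct is $\lbot$; and because $\lbot$ is the least upper bound of $\varnothing$ it is the least element, so $\lbot\lor z=z$. Thus $x'=x'\land x''$. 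Running the symmetric computation from $x''$, using $x\lor x'=\ltop$ and $x\land x''=\lbot$, gives $x''=x''\land x'=x'\land x''$, whence $x'=x'\land x''=x''$.

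The step I expect to be the crux — and the only place where anything beyond a bounded partial order is used — is the distributive law $x'\land(x\lor x'')=(x'\land x)\lor(x'\land x'')$. This is not a consequence of having greatest lower and least upper bounds alone: in a non-distributive bounded lattice a single element can have several complements (the diamond $M_3$ is the standard witness), so the conclusion genuinely fails without distributivity. I would therefore expect distributivity of $\land$ over $\lor$ to be part of the ambient structure here (as it is in any Boolean algebra, which is the setting the paper is heading towards), and I would flag it explicitly as the hypothesis actually doing the work. Granting distributivity, everything else is routine, relying only on $\ltop$ acting as a neutral element for $\land$, $\lbot$ acting as a neutral element for $\lor$, and commutativity of meet; antisymmetry of $\leq$ then delivers the claimed equality $\lneg x$ is well defined.
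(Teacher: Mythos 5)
The paper offers no proof of this lemma at all---it is stated bare, immediately after the definition of complement, evidently regarded as standard---so there is no argument of the paper to compare yours against; the only question is whether your proof establishes the statement under the paper's stated hypotheses. It does not quite, and your own diagnosis of why is exactly right. Your computation $x'=x'\land\ltop=x'\land(x\lor x'')=(x'\land x)\lor(x'\land x'')=\lbot\lor(x'\land x'')=x'\land x''$, together with its mirror image, is the classical argument and is correct in any bounded lattice where $\land$ distributes over $\lor$. But distributivity appears nowhere in the paper: the lemma is asserted for a bare partial order in which the relevant greatest lower and least upper bounds exist, and even Definition~\ref{defn.nom.bool.alg} (the `nominal Boolean algebra') postulates only a finitely fresh-complete nominal poset with complements and a compatible substitution action, never distributivity.

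Your instinct that the statement genuinely fails without distributivity is also correct, and the counterexample survives all the nominal structure: $M_3$ (bottom, top, and three pairwise incomparable middle elements), equipped with the trivial permutation action, is a finitely fresh-complete nominal poset---every support is empty, so $A\#$lower bounds are just lower bounds---in which each middle element has two distinct complements. So the gap you flagged is real, but it is a gap in the paper's statement rather than a defect of your argument; no proof can close it without adding a hypothesis such as distributivity (or working in a distributive or Boolean lattice outright). The point is not cosmetic for the paper, since uniqueness is precisely what licenses writing $\lneg x$ as a function, which is then used in \rulefont{Compat\lneg} and in the soundness argument for the negation rules; making distributivity explicit both repairs this lemma and is in any case expected of anything called a Boolean algebra.
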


\begin{corr}
Suppose $\mathcal L=(|\mathcal L|,\act,\leq)$ is a nominal poset.
\begin{enumerate*}
\item
Suppose $X{\subseteq}|\mathcal L|$ is finite and $A{\subseteq}\mathbb A$, and $\freshwedge{A} X$ exists. 

Then $\supp(\freshwedge{A} X){\subseteq} \bigcup\{\supp(x)\mid x\in X\}{\setminus} A$.
\item
Suppose $x{\in}|\mathcal L|$, and suppose $\lneg x$ exists.
Then $\supp(\lneg x)=\supp(x)$.
\end{enumerate*}
\end{corr}
\begin{proof}
We can use symmetry (equivariance) properties of atoms:
\begin{enumerate*}
\item
Using \cite[Theorems~2.29 and~4.7]{gabbay:fountl} $\supp(\freshwedge{A}X)\subseteq\bigcup\{\supp(x)\mid x\in X\}\cup A$.\footnote{This is just a fancy way of observing that since the definition of $\bigwedge$ is symmetric in atoms, the the result must be at least as symmetric as the inputs.} 
Since by assumption $A\cap\supp(\freshwedge{A}X)=\varnothing$, the result follows.
\item
By \cite[Theorem~4.7]{gabbay:fountl} and since 
the map $x\mapsto\lneg x$ is injective.
\end{enumerate*}
\end{proof}

\subsection{Substitution algebra}
\label{subsect.sigma}

The reader may be used to seeing substitution as a concrete operation on sets.
However, using nominal sets we can powerfully (and axiomatically) generalise substitution to be an abstract (nominal) algebraic structure.

\maketab{tab0}{@{\hspace{4em}}L{6em}@{\ }L{3em}@{\ }R{8em}@{\ }L{10em}}

\begin{figure}[t]
\begin{minipage}{\textwidth}
\begin{tab0}
\rulefont{Suba} && a[a{\sm}u]=&u
\\[.5ex]
\rulefont{Subid} && x[a{\sm}a]=&x
\\
\rulefont{Sub\#} &a\#x\limp& x[a{\sm}u]=&x
\\
\rulefont{Sub\alpha}&b\#x\limp&x[a{\sm}u]=&((b\;a)\act x)[b{\sm}u]
\\
\rulefont{Sub\sigma} &a\#v\limp & x[a{\sm}u][b{\sm}v]=&x[b{\sm}v][a{\sm}u[b{\sm}v]]
\end{tab0}
\end{minipage}
\caption{Nominal algebra axioms for substitution action} 
\label{fig.nom.sigma}
\end{figure}

\begin{frametxt}
\begin{defn}
\label{defn.term.sub.alg}
A \deffont{termlike substitution algebra} over $\Sigma$ is a tuple $\ns U=(|\ns U|,\act,\f{atm},\f{sub})$ where:
\begin{itemize*}
\item
$(|\ns U|,\act)$ is a nominal set. 
\item
$\f{atm}:\mathbb A\to|\ns U|$ is an equivariant injection, usually written invisibly (so we write $\f{atm}(a)$ just as $a$). 
\item
$\f{sub}:|\ns U|\times\mathbb A\times|\ns U|\to |\ns U|$ is an equivariant \deffont{substitution action}, written infix $v[a\sm u]$.
\end{itemize*}
such that the equalities \rulefont{Suba} to \rulefont{Sub\sigma} of Figure~\ref{fig.nom.sigma} hold, where $x$, $u$, and $v$ range over elements of $|\ns U|$.
\end{defn}
\end{frametxt}

\begin{frametxt}
\begin{defn}
\label{defn.eq.nu}
Suppose $\ns U=(|\ns U|,\act,\f{sub},\f{atm})$ is a termlike substitution algebra.
A \deffont{substitution algebra} over $\ns U$ is a tuple $\ns B=(|\ns B|,\act,\f{sub})$ where:
\begin{itemize*}
\item
$(|\ns B|,\act)$ is a nonempty nominal set.
\item
$\f{sub}:|\ns B|\times\mathbb A\times|\ns U|\to\ns B$
is an equivariant \deffont{substitution} function satisfying the axioms \rulefont{Subid} to \rulefont{Sub\sigma} of Figure~\ref{fig.nom.sigma}, where $x$ ranges over elements of $|\ns B|$ and $u$ and $v$ range over elements of $|\ns U|$.
\end{itemize*}
\end{defn}
\end{frametxt}

\begin{xmpl}
\label{xmpl.termlike}
\begin{enumerate*}
\item
The set of atoms $\mathbb A$ is a termlike substitution algebra where $\f{atm}(a)=a$ and $a[a{\sm}x]=x$ and $b[a{\sm}x]=b$.
\item
Terms from Definition~\ref{defn.terms} are a termlike substitution algebra with $\f{atm}(a)=a$ and $r[a{\sm}s]$ equal to $r[a\ssm s]$.
\item
Predicates from Definition~\ref{defn.terms} are not a termlike substitution algebra, because there are no predicate variables or substitution for predicates.
Predicates are however a substitution algebra over terms.
\end{enumerate*}
In Corollary~\ref{corr.termlike.sub.alg} we will prove that $|\mathcal N|^\bullet$ from Definition~\ref{defn.termlike.sub.alg} is a termlike substitution algebra, and $X^\bullet$ from Definition~\ref{defn.termlike.sub.alg} is a substitution algebra.
For different and non-trivial classes of substitution algebras see \cite{gabbay:stusun,gabbay:stodfo}.
\end{xmpl}

\section{The model}
\label{sect.model}

\subsection{Nominal Boolean algebra}

Definition~\ref{defn.nom.bool.alg} is the culminating definition of this paper; we define a notion of Boolean algebraic structure suitable for giving an `absolute' interpretation of first-order logic.
This means that variables, substitution, and quantification must be part of the model just as much as conjunction and negation are part of ordinary Boolean algebras.
With our definitions so far, this is now not difficult.

Definition~\ref{defn.nom.bool.alg} is not quite enough to interpret first-order logic on its own.
For that, we need the notion of model which we develop next in Subsection~\ref{subsect.soundness}.

\begin{frametxt}
\begin{defn}
\label{defn.nom.bool.alg}
Suppose $\ns U$ is a termlike substitution algebra. 
A \deffont{nominal Boolean algebra} over $\ns U$ is a tuple $\ns B=(|\ns B|,\act,\leq,\f{sub})$ such that:
\begin{itemize*}
\item
$(|\ns B|,\act,\leq)$ is a finitely fresh-complete nominal poset with complements.
\item
$(|\ns B|,\act,\f{sub})$ is a substitution algebra over $\ns U$.
\item
The partial order is \emph{compatible} with the substitution action: 
$$
\hspace{-2em}\begin{array}{l@{\quad}r@{\ }l}
\rulefont{Compat\bigwedge} 
&
(\freshwedge{A}X)[a\sm u]=&\freshwedge{A}\{x[a\sm u]\mid x{\in} X\}\quad\text{if}\ A{\cap}\supp(u){=}\varnothing,\ a{\not\in} A
\\
\rulefont{Compat\lneg} 
&
(\lneg x)[a\sm u]=&\lneg (x[a\sm u]) .
\end{array}
$$
\end{itemize*}
\end{defn}
\end{frametxt}
A more accurate name for `nominal Boolean algebra' might be `nominal $\sigma\freshwedge{A}\lneg$ partial order'.
But that is a bit of a mouthful.\footnote{A possibly nice classification is to define a notion of \emph{bounded nominal lattice} which is equipped with $\land$, $\lor$, $\lbot$, $\ltop$---and also with $\lall{a}x$.  Then our nominal Boolean algebra is just a nominal lattice with complements and a compatible substitution action.}
It is quite easy to derive some basic properties of nominal Boolean algebras, which are useful for Theorem~\ref{thrm.sound}.

\begin{lemm}
\label{lemm.lub.lall}
If $b\#u$ then $(\lall{b}x)[a\sm u]=\lall{b}(x[a\sm u])$. 
\end{lemm}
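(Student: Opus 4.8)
The plan is to recognise this as a direct instantiation of the compatibility axiom $\rulefont{Compat\bigwedge}$ from Definition~\ref{defn.nom.bool.alg}. First I would unfold the notation: by Notation~\ref{nttn.lall} (together with the single-atom shorthand of Notation~\ref{nttn.Afresh.glb}) the term $\lall{b}x$ is by definition $\freshwedge{\{b\}}\{x\}$, the $\{b\}\#$greatest lower bound of the singleton $\{x\}$. So the claim rewrites as
$$
(\freshwedge{\{b\}}\{x\})[a\sm u]=\freshwedge{\{b\}}\{x[a\sm u]\},
$$
which is exactly the conclusion of $\rulefont{Compat\bigwedge}$ specialised to $A=\{b\}$ and $X=\{x\}$, using that $\{x[a\sm u]\mid x\in\{x\}\}=\{x[a\sm u]\}$.

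It then remains only to check the two side conditions of $\rulefont{Compat\bigwedge}$ under this instantiation. The first, $A\cap\supp(u)=\varnothing$, becomes $\{b\}\cap\supp(u)=\varnothing$, i.e.\ $b\not\in\supp(u)$, which is precisely the hypothesis $b\#u$. The second, $a\not\in A$, becomes $a\neq b$, which holds automatically by the permutative convention of Definition~\ref{defn.atoms}, under which distinct atom names denote distinct atoms. With both conditions verified, the axiom applies and delivers the displayed equality, and re-folding the notation yields $(\lall{b}x)[a\sm u]=\lall{b}(x[a\sm u])$.

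Since everything reduces to a single application of an axiom once the notation is unwound, I do not expect any real obstacle here; the only points requiring care are keeping track of the two side conditions and recalling that the permutative convention silently supplies $a\neq b$, which is what licenses the second condition $a\not\in A$.
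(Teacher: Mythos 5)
Your proposal is correct and is essentially the paper's own proof: the paper also dispatches the lemma as a single instance of \rulefont{Compat\bigwedge} with $A=\{b\}$, noting that the permutative convention supplies $a\not\in\{b\}$. You have merely spelled out the unfolding of Notation~\ref{nttn.lall} and the verification that $b\#u$ discharges the freshness side condition, which the paper leaves implicit.
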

\begin{proof}
This is just \rulefont{Compat\bigwedge} where $A=\{b\}$.
Recall that by our permutative convention $a\not\in\{b\}$.
\end{proof}

\begin{lemm}
\label{lemm.leq.glb}
$x\leq y$ if and only if $x\land y=x$.
\end{lemm}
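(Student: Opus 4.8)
The plan is to unfold the definition of $x\land y$ and apply the defining universal property of a greatest lower bound together with antisymmetry of $\leq$. Recall from Notation~\ref{nttn.lall} that $x\land y$ denotes the greatest lower bound of $\{x,y\}$, which exists because a nominal Boolean algebra is in particular a finitely fresh-complete nominal poset (taking $A=\varnothing$ and $X=\{x,y\}$). So $x\land y$ satisfies: it is a lower bound, i.e. $x\land y\leq x$ and $x\land y\leq y$; and it is greatest among lower bounds, i.e. whenever $z\leq x$ and $z\leq y$ then $z\leq x\land y$. Both directions of the biconditional follow by feeding the right elements into this property.

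For the left-to-right direction, I would assume $x\leq y$ and argue that $x$ is itself a lower bound of $\{x,y\}$: we have $x\leq x$ by reflexivity and $x\leq y$ by assumption. Hence by the greatest-lower-bound property $x\leq x\land y$. Combining this with $x\land y\leq x$ (which always holds, as $x\land y$ is a lower bound) and using antisymmetry of the partial order gives $x\land y=x$, as required.

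For the right-to-left direction, I would assume $x\land y=x$ and simply recall that $x\land y\leq y$ since $x\land y$ is a lower bound of $\{x,y\}$. Substituting $x$ for $x\land y$ yields $x\leq y$.

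I do not expect a genuine obstacle here: the result is the standard lattice-theoretic characterisation of $\leq$ in terms of meets, and the nominal structure plays no role. The only points requiring care are noticing that $x\land y$ is guaranteed to exist in the ambient nominal Boolean algebra, and being explicit that the forward direction uses both the lower-bound and the greatest-among-lower-bounds halves of the universal property, closed off by antisymmetry.
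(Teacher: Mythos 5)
Your proof is correct and is essentially the paper's own argument in expanded form: the paper simply observes that both $x\leq y$ and $x\land y=x$ are equivalent to ``$x$ is a greatest lower bound for $\{x,y\}$,'' which compresses exactly your use of the lower-bound and greatest-among-lower-bounds halves of the universal property together with antisymmetry. No gap; your version just spells out the two directions that the paper leaves implicit.
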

\begin{proof}
Both hold if and only if $x$ is a greatest lower bound for $\{x,y\}$.
\end{proof}

\begin{lemm}
\label{lemm.lub.comm}
\begin{enumerate*}
\item
$(x\land y)[a\sm u]=x[a\sm u]\land(y[a\sm u])$.
\item
$\lbot[a\sm u]=\lbot$.
\end{enumerate*}
\end{lemm}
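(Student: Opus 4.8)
The plan is to recognise both equalities as immediate consequences of the compatibility axioms \rulefont{Compat\bigwedge} and \rulefont{Compat\lneg}, read through the definitions in Notation~\ref{nttn.lall}. Neither part requires any genuinely new reasoning; the only subtlety is that part~2 is not a direct instance of \rulefont{Compat\bigwedge} and needs a short detour.

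For part~1, I would recall from Notation~\ref{nttn.lall} that $x\land y$ is by definition the greatest lower bound $\freshwedge{\varnothing}\{x,y\}$ of $\{x,y\}$. Since here $A=\varnothing$, the side-conditions $A\cap\supp(u)=\varnothing$ and $a\notin A$ of \rulefont{Compat\bigwedge} hold vacuously, so the axiom applies directly and gives
$$(x\land y)[a\sm u]=(\freshwedge{\varnothing}\{x,y\})[a\sm u]=\freshwedge{\varnothing}\{x[a\sm u],y[a\sm u]\}=x[a\sm u]\land y[a\sm u],$$
which is the claim.

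Part~2 is slightly less direct, because $\lbot$ is by definition a \emph{least upper bound} (namely $\freshvee{\varnothing}\varnothing$) whereas the compatibility axiom \rulefont{Compat\bigwedge} concerns greatest lower bounds $\freshwedge{A}X$. I would therefore route through the complement. First, $\ltop=\freshwedge{\varnothing}\varnothing$ is the greatest lower bound of the empty set, so \rulefont{Compat\bigwedge} with $A=X=\varnothing$ yields $\ltop[a\sm u]=\freshwedge{\varnothing}\varnothing=\ltop$. Next I would check that $\lbot$ is the complement of $\ltop$: since $\lbot$ is the least element we have $\lbot\leq\ltop$, so by Lemma~\ref{lemm.leq.glb} both $\ltop\land\lbot=\lbot$ and $\ltop\lor\lbot=\ltop$ hold, whence $\lbot$ satisfies the defining conditions of a complement of $\ltop$; by uniqueness of complements $\lneg\ltop=\lbot$. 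Finally, applying \rulefont{Compat\lneg},
$$\lbot[a\sm u]=(\lneg\ltop)[a\sm u]=\lneg(\ltop[a\sm u])=\lneg\ltop=\lbot.$$

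There is no real obstacle in this lemma; the only point requiring care is the observation that part~2 cannot be obtained from \rulefont{Compat\bigwedge} alone, and that the small detour identifying $\lbot$ with $\lneg\ltop$ (so that \rulefont{Compat\lneg} can be invoked) is exactly what makes it go through.
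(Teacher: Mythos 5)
Your proof is correct and follows essentially the same route as the paper, whose proof is simply ``From \rulefont{Compat\bigwedge} and \rulefont{Compat\lneg} respectively'': part~1 is \rulefont{Compat\bigwedge} with $A=\varnothing$, and part~2 is exactly the intended use of \rulefont{Compat\lneg}, identifying $\lbot$ as $\lneg\ltop$ (a reading the paper itself confirms in the soundness proof, where $\lbot$ is described as the complement of the greatest lower bound of $\varnothing$). Your write-up merely makes explicit the details the paper leaves implicit, including $\ltop[a\sm u]=\ltop$ via \rulefont{Compat\bigwedge} on the empty set.
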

\begin{proof}
From \rulefont{Compat\bigwedge} and \rulefont{Compat\lneg} respectively.
\end{proof}

\begin{corr}
\label{corr.leq.compat.sub}
If $x\leq y$ then $x[a\sm u]\leq y[a\sm u]$.
As a further corollary, if $x\leq y$ and $a\#x$ then $x\leq y[a\sm u]$ for every $u\in|\ns U|$.
\end{corr}
\begin{proof}
This follows from Lemma~\ref{lemm.leq.glb} and part~1 of Lemma~\ref{lemm.lub.comm}.
The further corollary follows using \rulefont{Sub\#} from Figure~\ref{fig.nom.sigma}.
\end{proof}

\begin{lemm}
\label{lemm.tallL.sound}
Suppose $x\in|\ns B|$ and $u\in|\ns U|$ and $a\in\atoms$.
Then $\lall{a}x\leq x[a\sm u]$.
\end{lemm}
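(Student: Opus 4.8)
The plan is to unfold $\lall a x$ via its definition as an $a\#$greatest lower bound and then reduce the claim directly to the `further corollary' of Corollary~\ref{corr.leq.compat.sub}. By Notation~\ref{nttn.lall}, $\lall a x=\freshwedge{a}\{x\}$ is by definition the $\leq$-greatest element of the set of $\{a\}$-fresh lower bounds of $\{x\}$, so two facts fall straight out: (i) $\lall a x$ is a lower bound of $\{x\}$, hence $\lall a x\leq x$; and (ii) $\lall a x$ is $\{a\}$-fresh, i.e. $\{a\}\cap\supp(\lall a x)=\varnothing$, which is exactly $a\#\lall a x$.

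With these in hand I would invoke the further corollary of Corollary~\ref{corr.leq.compat.sub}, namely that $w\leq y$ together with $a\#w$ yields $w\leq y[a\sm u]$ for every $u\in|\ns U|$. Instantiating $w:=\lall a x$ and $y:=x$, facts (i) and (ii) supply precisely the two required hypotheses, and the conclusion $\lall a x\leq x[a\sm u]$ is exactly the statement of the lemma.

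There is essentially no obstacle here: all the real content is carried by that further corollary, whose own justification rests on $x[a\sm u]=x$ from \rulefont{Sub\#} when $a\#x$, combined with monotonicity of substitution (part~1 of Lemma~\ref{lemm.lub.comm} via Lemma~\ref{lemm.leq.glb}). The only point needing care is to note that freshness of $\lall a x$ for $a$ holds \emph{by definition}, being built into the meaning of an $a\#$greatest lower bound, so no separate support computation is required.
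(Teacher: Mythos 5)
Your proposal is correct and follows exactly the paper's own argument: unfold $\lall{a}x$ as the $a\#$greatest lower bound of $\{x\}$ to get $\lall{a}x\leq x$ and $a\#\lall{a}x$, then apply the further corollary of Corollary~\ref{corr.leq.compat.sub}. No gaps; your extra remark correctly identifies that the corollary itself rests on \rulefont{Sub\#} together with monotonicity of substitution.
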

\begin{proof}
By Notation~\ref{nttn.lall} $\lall{a}x$ is the $a\#$greatest lower bound for $\{x\}$.
This means that $\lall{a}x\leq x$ and $a\#\lall{a}x$.
We use Corollary~\ref{corr.leq.compat.sub}.
\end{proof}

\begin{lemm}
\label{lemm.tallR.sound}
Suppose $x,y{\in}|\ns B|$ and $a{\in}\atoms$. 
Then $x\leq y$ and $a\#x$ imply $x\leq\lall{a}y$.
\end{lemm}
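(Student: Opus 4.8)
The plan is to unfold the meaning of $\lall{a}y$ and observe that the two hypotheses say exactly that $x$ is one of the candidate bounds over which this quantifier takes a greatest element.

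Recall from Notation~\ref{nttn.lall} that $\lall{a}y$ denotes the $a\#$greatest lower bound $\freshwedge{\{a\}}\{y\}$ of the singleton $\{y\}$. By the definition of finite fresh-completeness, this is the $\leq$-greatest element of the set of $\{a\}$-fresh lower bounds of $\{y\}$, namely
$$
S=\{x'\in|\ns B| \mid a\#x'\ \wedge\ x'\leq y\}.
$$
Here I use that the side-condition $\{a\}\cap\supp(x')=\varnothing$ is precisely $a\#x'$, and that $\Forall{x''{\in}\{y\}}x'{\leq}x''$ collapses, for the singleton $\{y\}$, to the single requirement $x'\leq y$.

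The two hypotheses then say exactly that $x$ belongs to $S$: we are given $a\#x$ and $x\leq y$, which are precisely the two defining conditions of $S$. Since $\lall{a}y$ is by construction the $\leq$-greatest element of $S$, every member of $S$ lies $\leq$-below it; instantiating this at $x$ yields $x\leq\lall{a}y$, as required.

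I expect essentially no obstacle here: the only point to verify is that $x$ is a legitimate candidate, i.e.\ that it satisfies both clauses defining $S$, and this is immediate. The lemma is really just a direct reading of the universal (maximality) half of the defining property of the $a\#$greatest lower bound, and it is the natural companion to Lemma~\ref{lemm.tallL.sound}, which instead invoked the lower-bound half of that same property.
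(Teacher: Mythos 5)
Your proof is correct and is exactly the paper's argument: the paper's one-line proof (``this is the $a\#$greatest lower bound property of $\lall{a}y$ with respect to $\{y\}$'') is precisely the maximality half of the defining property that you spell out in detail. Your unfolding of the set of $\{a\}$-fresh lower bounds and the observation that the hypotheses place $x$ in that set is a faithful, if more explicit, rendering of the same argument.
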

\begin{proof}
This is the $a\#$greatest lower bound property of $\lall{a}y$ with respect to $\{y\}$. 
\end{proof}

\begin{corr}
$\lall{a}x$ is the greatest lower bound of $\{x[a\sm u]\mid u\in|\ns U|\}$ in $\ns B$.
\end{corr}
\begin{proof}
By Lemma~\ref{lemm.tallL.sound} $\lall{a}x$ is a lower bound.
Now suppose $z\leq x[a\sm u]$ for every $u\in|\ns U|$.
In particular then, $z\leq x[a\sm a]\stackrel{\rulefont{Subid}}{=}x$.
By Lemma~\ref{lemm.tallR.sound} $z\leq\lall{a}x$.
\end{proof}
 
\subsection{Soundness}
\label{subsect.soundness}

\begin{frametxt}
\begin{defn}
\label{defn.model}
A \deffont{model} $\interp M=(\ns U,\ns B,\text{-}^\mden)$ consists of: 
\begin{itemize*}
\item
A termlike substitution algebra $\ns U$ (Definition~\ref{defn.term.sub.alg}).
\item
A nominal Boolean algebra $\ns B$ over $\ns U$ (Definition~\ref{defn.nom.bool.alg}).
\item
An assignment $\text{-}^\mden$ to each term-former $\tf f$ of an equivariant function $\tf f^\mden:|\ns U|^{\f{arity}(\tf f)}\to |\ns U|$ and to each predicate-former $\tf P$ of an equivariant function $\tf P^\mden:|\ns U|^{\f{arity}(\tf P)}\to |\ns B|$ such that
$$
\begin{array}{@{\hspace{-2em}}l@{\quad}r@{\ }l}
\rulefont{Comm\tf f} & 
\tf f^\mden(u_1,\dots,u_{\f{arity}(\tf f)})[a\sm u]=&
\tf f^\mden(u_1[a\sm u],\dots,u_{\f{arity}(\tf f)}[a\sm u])
\\
\rulefont{Comm\tf P} & 
\tf P^\mden(u_1,\dots,u_{\f{arity}(\tf f)})[a\sm u]=&
\tf P^\mden(u_1[a\sm u],\dots,u_{\f{arity}(\tf f)}[a\sm u]).
\end{array}
$$
\end{itemize*}
\end{defn}
\end{frametxt}

For the rest of this section fix a model $\mathcal M=(\ns U,\ns B,\text{-}^\mden)$.

\begin{defn}
Define an \deffont{interpretation} function $\model{\text{-}}$ mapping terms and predicates (Definition~\ref{defn.terms}) to elements of $|\ns U|$ and $|\ns B|$ respectively, as follows:
$$
\begin{array}{r@{\ }l@{\qquad}r@{\ }l}
\model{a}=&\f{atm}(a)
&
\model{\tf f(r_1,\dots,r_n)}=&\tf f^\mden(\model{r_1},\dots,\model{r_n})
\\
\model{\tbot}=&\lbot
&
\model{\tf P(r_1,\dots,r_n)}=&\tf P^\mden(\model{r_1},\dots,\model{r_n})
\\
\model{\phi'\tand\phi}=&\model{\phi'}\land\model{\phi}
&
\model{\tall a.\phi}=&\lall{a}\model{\phi}
\\
\model{\tneg \phi}=&\lneg\model{\phi}
\end{array}
$$
\end{defn}

\begin{defn}
Write $\interp M\ment \phi$ when $\model{\phi}=\ltop$ and call $\phi$ \deffont{valid} in $\interp M$.
\end{defn}

\begin{prop}
\label{prop:substitution}
$\model{r[a\ssm s]}=\model{r}[a\sm \model{s}]$ and $\model{\phi[a\ssm s]}=\model{\phi}[a\sm\model{s}]$.
\end{prop}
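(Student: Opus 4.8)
The plan is to prove both equalities simultaneously by mutual structural induction on the term $r$ and the predicate $\phi$, feeding the defining clauses of $\model{-}$ into the commutation and compatibility axioms of the model. Before starting the induction I would record one auxiliary fact needed only for the quantifier case: the interpretation $\model{-}$ is equivariant (a one-line induction, since $\f{atm}$, $\f{sub}$, each $\tf f^\mden$ and $\tf P^\mden$, and the operations $\lall{a}$, $\land$, $\lneg$, $\lbot$ are all equivariant), and an equivariant map does not increase support. Hence $\supp(\model{r})\subseteq\fa(r)$ and $\supp(\model{\phi})\subseteq\fa(\phi)$, so in particular $b\#s$ implies $b\#\model{s}$ — exactly the hypothesis that Lemma~\ref{lemm.lub.lall} will demand.

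For terms the base cases are immediate from the substitution axioms. When $r=a$ we have $a[a\ssm s]=s$, so $\model{a[a\ssm s]}=\model{s}$, while $\model{a}[a\sm\model{s}]=a[a\sm\model{s}]=\model{s}$ by \rulefont{Suba}. When $r=b$ with $b\neq a$ we have $b[a\ssm s]=b$ and, since $a\#b$, also $b[a\sm\model{s}]=b$ by \rulefont{Sub\#}. The term-former case $r=\tf f(r_1,\dots,r_n)$ follows by applying the induction hypothesis componentwise to each $r_i$ and then using \rulefont{Comm\tf f} to pull the substitution back outside $\tf f^\mden$.

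The propositional connectives go the same way. The predicate-former case $\phi=\tf P(r_1,\dots,r_n)$ uses \rulefont{Comm\tf P}; the case $\phi=\tbot$ uses part~2 of Lemma~\ref{lemm.lub.comm}, namely $\lbot[a\sm\model{s}]=\lbot$; the case $\phi=\phi_1\tand\phi_2$ uses part~1 of Lemma~\ref{lemm.lub.comm}; and the case $\phi=\tneg\phi'$ uses \rulefont{Compat\lneg}. In each case the induction hypothesis is applied to the immediate subpredicates before invoking the relevant compatibility law.

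The one case requiring care, and the expected main obstacle, is the quantifier $\phi=\tall b.\phi'$, since capture-avoiding substitution only distributes through a binder after a suitable $\alpha$-renaming. Working up to $\alpha$-equivalence I would first rename the bound atom so that $b\neq a$ and $b\#s$; then $(\tall b.\phi')[a\ssm s]=\tall b.(\phi'[a\ssm s])$, so by the definition of $\model{-}$ and the induction hypothesis $\model{(\tall b.\phi')[a\ssm s]}=\lall{b}(\model{\phi'}[a\sm\model{s}])$. Finally, since $b\#s$ yields $b\#\model{s}$ by the support bound recorded at the outset, Lemma~\ref{lemm.lub.lall} moves the substitution back outside the quantifier, giving $(\lall{b}\model{\phi'})[a\sm\model{s}]=\model{\tall b.\phi'}[a\sm\model{s}]$, as required. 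The only genuine subtlety is that this step presupposes $\model{-}$ is well-defined on $\alpha$-equivalence classes, i.e.\ that $\lall{b}\model{\phi'}$ is invariant under $\alpha$-renaming of $b$; I would either verify this invariance as a preliminary lemma using equivariance together with the support bound on $\model{\phi'}$, or fold it into the equivariance observation made at the start.
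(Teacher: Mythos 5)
Your proof is correct and follows essentially the same route as the paper's: a mutual structural induction using \rulefont{Suba}/\rulefont{Sub\#} for atoms, \rulefont{Comm\tf f}/\rulefont{Comm\tf P} for term- and predicate-formers, Lemma~\ref{lemm.lub.comm} and \rulefont{Compat\lneg} for the propositional connectives, and Lemma~\ref{lemm.lub.lall} for the quantifier. The extra details you supply---equivariance of $\model{\text{-}}$ giving $\supp(\model{s})\subseteq\fa(s)$ and hence $b\#\model{s}$, and well-definedness on $\alpha$-equivalence classes---are precisely what the paper compresses into the word ``routine'' (and your citation of \rulefont{Suba} in the atom case is the right axiom, where the paper's proof writes \rulefont{Subid}).
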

\begin{proof}
By routine inductions on $r$ and $\phi$:
\begin{itemize*}
\item
\emph{The case of $a$.}\quad
$\model{a[a\ssm s]}=\model{s}$ and by \rulefont{Subid} $\model{a}[a\sm\model{s}]=\model{s}$.
\item
\emph{The case of $\tf f(r_1,\dots,r_n)$.}\quad
Using \rulefont{Comm\tf f}. 
\item
\emph{The case of $\phi'\tand\phi$.}\quad
Using part~1 of Lemma~\ref{lemm.lub.comm}.
\item
\emph{The case of $\tall a.\phi$.}\quad
Using Lemma~\ref{lemm.lub.lall}.
\end{itemize*} 
\end{proof}

\begin{thrm}
\label{thrm.sound}
If $\Phi\cent\Psi$ then $\bigwedge_{\phi\in\Phi}\model{\phi}\leq\bigvee_{\hspace{-.6ex}\psi\in\Psi}\model{\psi}$. 
\end{thrm}
\begin{proof}
\begin{itemize*}
\item
Soundness of \rulefont{\tand L} and \rulefont{\tand R} follows from $\model{\phi_1}\land\model{\phi_2}$ being a greatest $\leq$-lower bound of $\model{\phi_1}$ and $\model{\phi_2}$. 
\item
Soundness of \rulefont{\tneg L} and \rulefont{\tneg R} follows from the fact of complements that $\lneg x\leq y$ if and only if $\lneg y\leq x$.
\item
Soundness of \rulefont{\tbot L} follows from $\lbot$ being the complement of the greatest lower bound of $\varnothing$.
\item
Soundness of \rulefont{\tall L} follows using Lemma~\ref{lemm.tallL.sound} and Proposition~\ref{prop:substitution}, and soundness of \rulefont{\tall R} follows from Lemma~\ref{lemm.tallR.sound}. 
\end{itemize*}
\end{proof}

\section{Lifting ordinary models to nominal models}
\label{sect.complete}

We have to recall the (standard) definition of model for first-oder logic.
This is Subsection~\ref{subsect.standard}.
Then in Subsection~\ref{subsect.nom.model.from.standard} we show how to lift this to a nominal model and obtain completeness as an immediate corollary (Theorem~\ref{thrm.functional.model} and Corollary~\ref{corr.nom.complete}).

\subsection{Ordinary model of first-order logic}
\label{subsect.standard}

We briefly sketch the ordinary model of first-order classical logic, with valuations and without atoms.
This model is not intended to be sophisticated; we will just need that one exists.

\begin{nttn}
To avoid the confusion between sets and nominal sets, we may write \emph{ordinary set} for the former.
\end{nttn}

\begin{defn}
\label{defn.model.standard}
An \deffont{ordinary model} $\mathcal N=(|\mathcal N|,\text{-}^\nden)$ is a tuple such that:
\begin{itemize*}
\item
$|\mathcal N|$ is some non-empty (ordinary) \deffont{underlying set}.
\item
$\text{-}^\nden$ assigns to each term-former $\tf f$ a function $\tf f^\nden:|\mathcal N|^{\f{arity}(\tf f)}\to |\mathcal N|$ and to each predicate-former $\tf P$ a function $\tf P^\nden:|\mathcal N|^{\f{arity}(\tf P)}\to \{\lbot,\ltop\}$. 
\end{itemize*}
\end{defn}

\begin{defn}
\label{defn.varsigma}
Write $\atoms\Func |\mathcal N|$ for the set of functions from atoms to $|\mathcal N|$.
Let $\varsigma$ range over elements of $\atoms\Func |\mathcal N|$ and call these \deffont{valuations} (to $\mathcal N$).

Also if $x\in|\mathcal N|$ then write $\varsigma[a\ssm x]$ for the function such that $(\varsigma[a\ssm x])(a)=x$ and $(\varsigma[a\ssm x])(b)=\varsigma(b)$.
\end{defn}

\begin{defn}
$\{\lbot,\ltop\}$ is a complete Boolean algebra.
We use standard definitions, such as $\land$, $\lneg$, $\bigwedge$, and $\bigvee$ without further comment
\end{defn}

\begin{defn}
Define an \deffont{interpretation} function $\ndenot{\varsigma}{\text{-}}$ mapping terms and predicates (Definition~\ref{defn.terms}) to elements of $|\mathcal N|$ and $\{\lbot,\ltop\}$ (considered as a Boolean algebra) respectively, as follows:
$$
\begin{array}{r@{\ }l@{\qquad}r@{\ }l}
\ndenot{\varsigma}{a}=&\varsigma(a)
&
\ndenot{\varsigma}{\tf f(r_1,\dots,r_n)}=&\tf f^\nden(\ndenot{\varsigma}{r_1},\dots,\ndenot{\varsigma}{r_n})
\\
\ndenot{\varsigma}{\tbot}=&\lbot
&
\ndenot{\varsigma}{\tf P(r_1,\dots,r_n)}=&\tf P^\nden(\ndenot{\varsigma}{r_1},\dots,\ndenot{\varsigma}{r_n})
\\
\ndenot{\varsigma}{\phi'\tand\phi}=&\ndenot{\varsigma}{\phi'}\land\ndenot{\varsigma}{\phi}
&
\ndenot{\varsigma}{\tall a.\phi}=&\bigwedge_{x\in|\mathcal N|}\ndenot{\varsigma[a\ssm x]}{\phi}
\\
\ndenot{\varsigma}{\tneg \phi}=&\lneg\ndenot{\varsigma}{\phi}
\end{array}
$$
\end{defn}

Theorem~\ref{thrm.fol.standard} expresses the usual soundness and completeness result for first-order logic; for details and proofs see e.g. \cite[Subsection~1.5]{dalen:logs}:
\begin{thrm}
\label{thrm.fol.standard}
$\Phi\cent\Psi$ is derivable if and only if for every ordinary model $\mathcal N$ and every valuation $\varsigma$ to $\mathcal N$ it is the case that $\bigwedge_{\phi\in\Phi}\ndenot{\varsigma}{\phi}=\ltop$ implies $\bigvee_{\hspace{-.6ex}\psi\in\Psi}\ndenot{\varsigma}{\psi}=\ltop$.
\end{thrm}

\subsection{Constructing a nominal model from an ordinary model}
\label{subsect.nom.model.from.standard}

We now show how to `lift' a model over ordinary sets to a nominal model (Proposition~\ref{prop.standard.nom.bool}).
We then deduce completeness for nominal Boolean algebras (Corollary~\ref{corr.nom.complete}).

\begin{defn}
\label{defn.varsigma.action}
Give valuations $\varsigma\in\atoms\Func|\mathcal N|$ a permutation action by
$$
(\pi\act\varsigma)(a)=\varsigma(\pi^\mone(a)) .
$$
\end{defn}

\begin{rmrk}
\label{rmrk.not.nominal}
$\atoms\Func |\mathcal N|$ forms a set with a permutation action (Definition~\ref{defn.fin.supp}).
This is not in general a nominal set because it has elements without finite support\footnote{The finitely-supported $\varsigma$ are such that there exists a finite $A\subseteq\mathbb A$ such that for all $a,b\not\in A$, $\varsigma(a)=\varsigma(b)$; it is not worth our while to impose this restriction, though it would do no harm to do so.} 
for our purposes that will not be a problem.

In fact, the action from Definition~\ref{defn.varsigma.action} is a special case of the standard \emph{conjugation action} $(\pi\act\varsigma)(a)=\pi\act\varsigma(\pi^\mone(a))$ where we use the trivial action $\pi\act x=x$ for every $x\in |\mathcal N|$.
This is standard; for a specifically `nominal' discussion see \cite[Definition~2.4.2]{gabbay:nomtnl}.
\end{rmrk}

\begin{defn}
\label{defn.idenot.beta}
Given an ordinary set $X$ write $X^\bullet=(|X^\bullet|,\act)$ for 
\begin{itemize*}
\item
$|X^\bullet|$ is the set of functions $f$ from $\atoms\Func |\mathcal N|$ to $X$ such that there exists a finite set $A$ such that if $\varsigma(a)=\varsigma'(a)$ for every $a\in A$ then $f(\varsigma)=f(\varsigma')$, and
\item
the permutation action $\act$ is defined by 
$$
(\pi\act f)(\varsigma)=f(\pi^\mone\act\varsigma).
$$
\end{itemize*}
We will be most interested in $|\mathcal N|^\bullet$ and $\{\lbot,\ltop\}^\bullet$.
\end{defn}

\begin{lemm}
\label{lemm.idenot.beta.pnom}
For any ordinary set $X$,\ $X^\bullet$ from Definition~\ref{defn.idenot.beta} determines a nominal set (Definition~\ref{defn.nominal.set}).
\end{lemm}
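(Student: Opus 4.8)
The plan is to check the two requirements of Definition~\ref{defn.nominal.set} in turn: that $X^\bullet=(|X^\bullet|,\act)$ is a set with a permutation action (Definition~\ref{defn.fin.supp}), and that every element of $|X^\bullet|$ has finite support. Throughout I will use the single identity $(\pi^\mone\act\varsigma)(a)=\varsigma(\pi(a))$, which is immediate from Definition~\ref{defn.varsigma.action} by substituting $\pi^\mone$ for $\pi$.

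First I would check that $\act$ is well-defined, i.e. that $\pi\act f\in|X^\bullet|$ whenever $f\in|X^\bullet|$. Suppose $f$ satisfies the defining condition of Definition~\ref{defn.idenot.beta} for the finite set $A$. I claim $\pi\act f$ satisfies it for the finite set $\pi(A)=\{\pi(a)\mid a\in A\}$. Indeed, if $\varsigma(b)=\varsigma'(b)$ for all $b\in\pi(A)$, then for each $a\in A$ we have $(\pi^\mone\act\varsigma)(a)=\varsigma(\pi(a))=\varsigma'(\pi(a))=(\pi^\mone\act\varsigma')(a)$, so the defining property of $f$ gives $f(\pi^\mone\act\varsigma)=f(\pi^\mone\act\varsigma')$, which is exactly $(\pi\act f)(\varsigma)=(\pi\act f)(\varsigma')$.

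Next the two group-action equations. These reduce to the corresponding facts for the action on valuations, which is a set with a permutation action by Remark~\ref{rmrk.not.nominal}: from $\id\act\varsigma=\varsigma$ we get $(\id\act f)(\varsigma)=f(\id\act\varsigma)=f(\varsigma)$, and from $(\pi')^\mone\act(\pi^\mone\act\varsigma)=(\pi\circ\pi')^\mone\act\varsigma$ we get $\bigl(\pi\act(\pi'\act f)\bigr)(\varsigma)=f\bigl((\pi')^\mone\act(\pi^\mone\act\varsigma)\bigr)=f\bigl((\pi\circ\pi')^\mone\act\varsigma\bigr)=\bigl((\pi\circ\pi')\act f\bigr)(\varsigma)$. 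The only point needing mild care is the bookkeeping of inverses: because the action on functions is defined contravariantly, the order of composition inverts twice and comes out correct.

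Finally, finite support. I would show directly that the witnessing set $A$ of $f$ supports $f$ in the sense of Definition~\ref{defn.support}; finiteness of the support then follows from Theorem~\ref{thrm.supp}. So suppose $\pi(a)=a$ for all $a\in A$. For every valuation $\varsigma$ and every $a\in A$ we have $(\pi^\mone\act\varsigma)(a)=\varsigma(\pi(a))=\varsigma(a)$, so by the defining property of $f$, $(\pi\act f)(\varsigma)=f(\pi^\mone\act\varsigma)=f(\varsigma)$; hence $\pi\act f=f$. Thus $A$ supports $f$, so $f$ is finitely supported and $X^\bullet$ is a nominal set. The whole argument is routine; there is no real obstacle beyond keeping the inverses in the contravariant action straight.
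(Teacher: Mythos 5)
Your proof is correct and follows essentially the same route as the paper: the paper declares the group-action verification routine and devotes its argument to exactly your final step, showing that the witnessing set $A$ supports $f$ via $(\pi^\mone\act\varsigma)(a)=\varsigma(\pi(a))=\varsigma(a)$ for $a\in A$. Your extra spelled-out details (well-definedness of $\pi\act f$ and the contravariant composition check) are fine, and the appeal to Theorem~\ref{thrm.supp} at the end is unnecessary since exhibiting a finite supporting set is already what ``finitely supported'' means in Definition~\ref{defn.support}.
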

\begin{proof}
It is routine to verify that the permutation action is indeed a group action.
It remains to check finite support.

Suppose $\pi(a)=a$ for every $a\in A$ (notation from Definition~\ref{defn.idenot.beta}).
By Definition~\ref{defn.idenot.beta} $(\pi\act f)(\varsigma)=f(\pi^\mone\act\varsigma)$.
By Definition~\ref{defn.varsigma.action} $(\pi^\mone\act\varsigma)(a)=\varsigma(\pi(a))$.
Now by assumption $\varsigma(a)=\varsigma(\pi(a))$ for every $a\in A$.
Therefore, $(\pi^\mone\act\varsigma)(a)=\varsigma(a)$ for every $a\in A$, and so $f(\varsigma)=f(\pi^\mone\act\varsigma)$, and so $(\pi\act f)(\varsigma)=f(\varsigma)$. 
Thus, $f$ has finite support (and is supported by $A$).
\end{proof}

\begin{defn}
\label{defn.termlike.sub.alg}
Suppose $\mathcal N$ is an ordinary model (Definition~\ref{defn.model.standard}).
Specify a tuple $(||\mathcal N|^\bullet|,\act,\f{atm},\f{sub})$ by: 
\begin{itemize*}
\item
$\f{atm}$ is defined by 
$\f{atm}(a)(\varsigma)=\varsigma(a)$.
\item
$\f{sub}$ is defined by
$f[a\sm g](\varsigma)=f(\varsigma[a\ssm g(\varsigma)])$. 
\end{itemize*}
We may write $|\mathcal N|^\bullet$ for $(||\mathcal N|^\bullet|,\act,\f{atm},\f{sub})$.\footnote{So we abuse notation by overloading with the nominal set, and $||\mathcal N|^\bullet|$ is: the underlying set of the substitution algebra obtained from the underlying set of the ordinary model $\mathcal N$. What could be simpler?} 

Furthermore if $X$ is any ordinary set then (abuse notation again and) write $X^\bullet=(|X^\bullet|,\act,\f{sub})$ where $f[a\sm g](\varsigma)=f(\varsigma[a\ssm g(\varsigma)])$ for $f\in|X^\bullet|$ and $g\in||\mathcal N|^\bullet|$.
\end{defn}

\begin{lemm}
\label{lemm.supp'.supp}
Suppose $X$ is an ordinary set and $f\in |X^\bullet|$.
Then if $\varsigma(a)=\varsigma'(a)$ for every $a\in\supp(f)$ then $f(\varsigma)=f(\varsigma')$. 
\end{lemm}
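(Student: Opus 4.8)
The plan is to bootstrap from the \emph{existence} of some finite separating set (guaranteed by Definition~\ref{defn.idenot.beta}) to the \emph{minimal} one $\supp(f)$. Fix a finite $A$ such that $\varsigma(a)=\varsigma'(a)$ for every $a\in A$ implies $f(\varsigma)=f(\varsigma')$; such an $A$ exists by Definition~\ref{defn.idenot.beta}. The proof of Lemma~\ref{lemm.idenot.beta.pnom} shows that any such $A$ supports $f$ in the nominal sense, so by Theorem~\ref{thrm.supp} we have $\supp(f)\subseteq A$, and both sets are finite. Write $S=\supp(f)$.

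First I would unwind what ``$S$ supports $f$'' says about the function $f$ directly. Combining $(\pi\act f)(\varsigma)=f(\pi^\mone\act\varsigma)$ with $(\pi^\mone\act\varsigma)(c)=\varsigma(\pi(c))$ (both recorded in the proof of Lemma~\ref{lemm.idenot.beta.pnom}), the identity $\pi\act f=f$ is equivalent to $f(\varsigma)=f(\varsigma\,{\circ}\,\pi)$ for all $\varsigma$, where $\varsigma\,{\circ}\,\pi$ is the valuation $c\mapsto\varsigma(\pi(c))$. In particular, whenever $\pi$ fixes every atom of $S$ we obtain $f(\varsigma)=f(\varsigma\,{\circ}\,\pi)$ for all $\varsigma$.

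The crux is the following single-atom claim: if $a\in A\setminus S$ and $\varsigma_1,\varsigma_2$ agree at every atom except possibly $a$, then $f(\varsigma_1)=f(\varsigma_2)$. To prove it I would pick a fresh atom $b\notin A$ (so $b\neq a$ and $b\notin S$) and apply the swap $(a\;b)$, which fixes $S$ pointwise since $a,b\notin S$; hence $f(\varsigma_i)=f(\varsigma_i\,{\circ}\,(a\;b))$ for $i=1,2$. The point is that the swapped valuations $\varsigma_1\,{\circ}\,(a\;b)$ and $\varsigma_2\,{\circ}\,(a\;b)$ now \emph{agree on all of} $A$: at $a$ both take the value $\varsigma_1(b)=\varsigma_2(b)$ (equal because $b\neq a$), and at any other $c\in A$ both take $\varsigma_1(c)=\varsigma_2(c)$ (using $c\neq a$, and $c\neq b$ since $b\notin A$). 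The $A$-agreement property then gives $f(\varsigma_1\,{\circ}\,(a\;b))=f(\varsigma_2\,{\circ}\,(a\;b))$, and the claim follows. This step, where the freshness $b\notin A$ is exactly what moves the dependence at $a$ out of the way, is the main obstacle; it is here that the permutation-theoretic notion of support has to be converted into a statement about agreement of valuations. Everything else is bookkeeping.

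Finally I would conclude by iterating the claim over the finite set $A\setminus S$. Given $\varsigma,\varsigma'$ agreeing on $S$, I would rewrite $\varsigma$ into $\varsigma'$ by replacing its value at one atom of $A\setminus S$ at a time; consecutive valuations differ at a single atom of $A\setminus S$, so the claim keeps the value of $f$ constant at each step. After finitely many steps I reach a valuation agreeing with $\varsigma'$ on all of $A$ (on $S$ by hypothesis, on $A\setminus S$ by construction), whence applying $f$ to it yields $f(\varsigma')$ by the $A$-agreement property. Chaining the resulting equalities gives $f(\varsigma)=f(\varsigma')$, as required.
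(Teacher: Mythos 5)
Your proof is correct and takes essentially the same route as the paper's: reduce the statement to changing the valuation at a single atom $a\in A\setminus\supp(f)$ at a time, then handle that single atom by choosing a fresh $b\notin A$ and exploiting invariance of $f$ under the swap $(a\;b)$ (legitimate since $a,b\notin\supp(f)$) together with the $A$-agreement property from Definition~\ref{defn.idenot.beta}. The only differences are cosmetic: you spell out the iteration over $A\setminus\supp(f)$ and the inclusion $\supp(f)\subseteq A$, both of which the paper leaves implicit.
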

\begin{proof}
Suppose $\varsigma(a)=\varsigma'(a)$ for every $a\in\supp(f)$.
It suffices to show that if $a\in A\setminus\supp(f)$ and $x\in |\mathcal N|$ then $f(\varsigma[a\ssm x])=f(\varsigma)$.

Choose fresh $b:\alpha$ (so $b\not\in A$).
By part~1 of Corollary~\ref{corr.stuff} $(b\ a)\act f=f$, since $a,b\not\in\supp(f)$.
We reason as follows:
$$
f(\varsigma[a\ssm x])\stackrel{b{\not\in}A}{=} f(\varsigma[a\ssm x][b\ssm\varsigma(a)])\stackrel{(b\,a)\act f{=}f}{=}f(\varsigma[b\ssm x])\stackrel{b{\not\in}A}{=} f(\varsigma)
$$ 
\end{proof}

\begin{corr}
\label{corr.termlike.sub.alg}
$\mathcal N^\bullet$ from Definition~\ref{defn.termlike.sub.alg} is indeed a termlike substitution algebra, and $X^\bullet$ is indeed a substitution algebra over $\mathcal N^\bullet$.
\end{corr}
\begin{proof}
We examine Definitions~\ref{defn.term.sub.alg} and~\ref{defn.eq.nu} and see that we need to check equivariance and the axioms \rulefont{Suba} (for the termlike case) and \rulefont{Subid} to \rulefont{Sub\sigma} from Figure~\ref{fig.nom.sigma}.
This is routine; we use Lemma~\ref{lemm.supp'.supp}.
\end{proof}

\begin{defn}
\label{defn.leq.nsN}
Suppose $g',g\in |\{\lbot,\ltop\}^\bullet|$. 
Write $g'\leq g$ when $\Forall{\varsigma}g'(\varsigma)\leq g(\varsigma)$.
\end{defn}

\begin{lemm}
\label{lemm.bot.top.ffcl}
$(|\{\lbot,\ltop\}^\bullet|,\act,\leq)$ is a finitely fresh-complete lattice with complements.
\end{lemm}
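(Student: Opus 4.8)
The plan is to verify each clause of \textbf{finitely fresh-complete nominal poset with complements} pointwise over valuations, exploiting that $\{\lbot,\ltop\}$ is a complete Boolean algebra. First I would dispatch the nominal poset structure. Lemma~\ref{lemm.idenot.beta.pnom} already supplies that $\{\lbot,\ltop\}^\bullet$ is a nominal set, so only the order needs attention: reflexivity, transitivity and antisymmetry of $\leq$ (Definition~\ref{defn.leq.nsN}) are inherited pointwise from $\{\lbot,\ltop\}$, and equivariance follows by the change of variables $\varsigma\mapsto\pi^\mone\act\varsigma$, noting that $(\pi\act g)(\varsigma)=g(\pi^\mone\act\varsigma)$ and that $\pi^\mone\act\varsigma$ ranges over all valuations as $\varsigma$ does.

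The heart of the proof is constructing $\freshwedge{A}X$ for finite $X\subseteq|\{\lbot,\ltop\}^\bullet|$ and finite $A\subseteq\atoms$. I would define $h$ by $h(\varsigma)=\bigwedge_{x\in X}\bigwedge\{x(\varsigma')\mid \varsigma'(b)=\varsigma(b)\text{ for all }b\notin A\}$, a meet that exists because $\{\lbot,\ltop\}$ is complete; concretely $h(\varsigma)=\ltop$ exactly when $x(\varsigma')=\ltop$ for every $x\in X$ and every $\varsigma'$ agreeing with $\varsigma$ off $A$. This is the evident `universally quantify over the atoms of $A$' operation, matching the clause $\ndenot{\varsigma}{\tall a.\phi}=\bigwedge_{x}\ndenot{\varsigma[a\ssm x]}{\phi}$ of the ordinary interpretation. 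I would then check four points: (i) $h$ is supported by $\bigl(\bigcup_{x\in X}\supp(x)\bigr)\setminus A$, since $h(\varsigma)$ depends on $\varsigma$ only through its values at atoms in this finite set; (ii) $h$ is $A$-fresh, which is immediate from (i); (iii) $h$ is a lower bound for $X$, by taking $\varsigma'=\varsigma$ inside the meet; and (iv) $h$ is the \emph{greatest} $A$-fresh lower bound.

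Step (iv) is the main obstacle, and it is exactly where nominal freshness earns its keep. Given any $A$-fresh lower bound $h'$, I would show $h'\leq h$ pointwise. The key leverage is Lemma~\ref{lemm.supp'.supp}: since $\supp(h')\cap A=\varnothing$, the value $h'(\varsigma')$ is constant as $\varsigma'$ ranges over the valuations agreeing with $\varsigma$ off $A$. Hence if $h'(\varsigma)=\ltop$, then $h'(\varsigma')=\ltop$ and so $x(\varsigma')\geq h'(\varsigma')=\ltop$ for every such $\varsigma'$ and every $x\in X$, forcing $h(\varsigma)=\ltop$. Thus $h'(\varsigma)\leq h(\varsigma)$ for all $\varsigma$. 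Without the freshness hypothesis on $h'$ this implication would fail, which is precisely what pins the quantification inside $h$ to the correct value.

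Finally I would treat complements and the lattice structure. For complements I set $(\lneg g)(\varsigma)=\lneg(g(\varsigma))$; this has the same support as $g$, hence lies in $|\{\lbot,\ltop\}^\bullet|$, and satisfies $g\land\lneg g=\lbot$ and $g\lor\lneg g=\ltop$ by pointwise reasoning in $\{\lbot,\ltop\}$, where $\lbot$ and $\ltop$ denote the constant functions. The bounded-lattice structure then follows from the finitely fresh-complete property already established: binary meets and the top arise as the $A=\varnothing$ instances $\freshwedge{\varnothing}\{g_1,g_2\}$ and $\freshwedge{\varnothing}\varnothing$, while binary joins and the bottom are recovered from these by complementation. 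All remaining verifications are routine pointwise computations in the two-element Boolean algebra.
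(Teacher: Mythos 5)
Your proposal is correct and follows essentially the same route as the paper: the paper also takes the pointwise complement $(\lneg g)(\varsigma)=\lneg(g(\varsigma))$ and defines $(\freshwedge{A}X)(\varsigma)=\bigwedge\{g(\varsigma[a\ssm x_a]_{a\in A})\mid g\in X,\ \Forall{a{\in}A}x_a\in|\mathcal N|\}$, which is exactly your meet over valuations agreeing with $\varsigma$ off $A$. The only difference is that the paper declares the remaining verification ``routine,'' whereas you spell it out---in particular your step (iv), using Lemma~\ref{lemm.supp'.supp} to show an $A$-fresh lower bound is constant across valuations differing only on $A$, is precisely the argument the paper leaves implicit.
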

\begin{proof}
It is easy to check that $\lneg g$ defined by $(\lneg g)(\varsigma)=\lneg(g(\varsigma))$ is a complement to $g$.

Now suppose $X\subseteq|\{\lbot,\ltop\}^\bullet|$ and $A\subseteq \atoms$ are finite, and suppose $A\cap\supp(u)=\varnothing$ and $a\in\atoms{\setminus} A$.
Define 
$$
(\freshwedge{A}X)(\varsigma)=\bigwedge\{g(\varsigma[a\ssm x_a]_{a\in A})\mid g\in X,\ \Forall{a{\in}A}x_a\in|\mathcal N|\} .
$$
It is routine to verify that this is an $A\#$greatest lower bound for $X$.
\end{proof}

\begin{prop}
\label{prop.standard.nom.bool}
$(|\{\lbot,\ltop\}^\bullet|,\act,\leq,\f{sub})$ is a nominal Boolean algebra over $|\mathcal N|^\bullet$.
\end{prop}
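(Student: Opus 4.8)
The plan is to verify the three clauses of Definition~\ref{defn.nom.bool.alg} in turn, and to observe that the first two are already available. That $(|\{\lbot,\ltop\}^\bullet|,\act,\leq)$ is a finitely fresh-complete nominal poset with complements is precisely Lemma~\ref{lemm.bot.top.ffcl}. That $(|\{\lbot,\ltop\}^\bullet|,\act,\f{sub})$ is a substitution algebra over $|\mathcal N|^\bullet$ is Corollary~\ref{corr.termlike.sub.alg} instantiated at the ordinary set $\{\lbot,\ltop\}$. So the genuine work lies entirely in the third clause, the compatibility conditions \rulefont{Compat\bigwedge} and \rulefont{Compat\lneg}, and I would establish both by unfolding the pointwise definitions at an arbitrary valuation $\varsigma$ and comparing.

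For \rulefont{Compat\lneg} the computation is immediate: using the pointwise $\lneg$ from Lemma~\ref{lemm.bot.top.ffcl} and the definition of $\f{sub}$ from Definition~\ref{defn.termlike.sub.alg}, both $((\lneg g)[a\sm u])(\varsigma)$ and $(\lneg(g[a\sm u]))(\varsigma)$ unfold to the same value $\lneg(g(\varsigma[a\ssm u(\varsigma)]))$, so the two elements agree at every $\varsigma$ and hence are equal.

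For \rulefont{Compat\bigwedge} I would write the substituted atom as $b$, so that the side-condition gives $b\not\in A$, and unfold both sides using the explicit formula for $\freshwedge{A}$ from the proof of Lemma~\ref{lemm.bot.top.ffcl}. Evaluated at $\varsigma$, the left-hand side is a $\bigwedge$ over updates of the shape $(\varsigma[b\ssm u(\varsigma)])[c\ssm x_c]_{c\in A}$, while the right-hand side is a $\bigwedge$ over updates $(\varsigma[c\ssm x_c]_{c\in A})[b\ssm u(\varsigma[c\ssm x_c]_{c\in A})]$. Two observations reconcile these. Since $b\not\in A$, the update at $b$ and the updates at the atoms of $A$ touch disjoint coordinates of the valuation and so commute. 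And since $A\cap\supp(u)=\varnothing$, Lemma~\ref{lemm.supp'.supp} applied to $u$ shows that $u(\varsigma')$ depends only on the restriction of $\varsigma'$ to $\supp(u)$, so modifying $\varsigma$ at atoms of $A$ cannot change it: $u(\varsigma[c\ssm x_c]_{c\in A})=u(\varsigma)$. With these two facts the two indexing sets of the meets coincide, so the meets are equal at every $\varsigma$, giving equality in $|\{\lbot,\ltop\}^\bullet|$.

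The main obstacle I anticipate is purely the bookkeeping in \rulefont{Compat\bigwedge}: keeping the iterated update $[c\ssm x_c]_{c\in A}$ straight through the unfolding, and recognising that the hypothesis $A\cap\supp(u)=\varnothing$ is exactly the hypothesis needed to invoke Lemma~\ref{lemm.supp'.supp} and slide $u$ past the $A$-updates. Once that recognition is made, nothing remains but a routine pointwise calculation.
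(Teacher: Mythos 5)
Your proposal is correct and follows essentially the same route as the paper's own proof: unpack Definition~\ref{defn.nom.bool.alg}, discharge the poset/complement clause by Lemma~\ref{lemm.bot.top.ffcl} and the substitution-algebra clause by Corollary~\ref{corr.termlike.sub.alg}, then verify \rulefont{Compat\bigwedge} and \rulefont{Compat\lneg} pointwise on valuations. The only difference is that the paper dismisses the compatibility axioms as ``easy calculations,'' whereas you carry them out explicitly --- correctly identifying that commuting the disjoint updates (using $a\not\in A$) and invoking Lemma~\ref{lemm.supp'.supp} (using $A\cap\supp(u)=\varnothing$) are exactly what makes the calculation go through.
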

\begin{proof}
We unpack Definition~\ref{defn.nom.bool.alg}.
\begin{itemize*}
\item
$(|\{\lbot,\ltop\}^\bullet|,\act)$ is a nominal set by Lemma~\ref{lemm.idenot.beta.pnom}.
\item
$(|\{\lbot,\ltop\}^\bullet|,\act,\leq)$ is by Lemma~\ref{lemm.bot.top.ffcl} a finitely fresh-complete lattice with complements.
\item
\rulefont{Compat\bigwedge} and \rulefont{Compat\lneg} follow by easy calculations, since Definition~\ref{defn.leq.nsN} is pointwise on valuations. 
\end{itemize*}
\end{proof}

\begin{defn}
\label{defn.functional.model}
The interpretation of term-formers and predicate-formers is simple:
$$
\begin{array}{r@{\ }l}
\tf f^\iden(f_1,\dots,f_{\f{arity}(\tf f)})(\varsigma)=&\tf f^\nden(f_1(\varsigma),\dots,f_{\f{arity}(\tf f)}(\varsigma))
\\
\tf P^\iden(f_1,\dots,f_{\f{arity}(\tf f)})(\varsigma)=&\tf P^\nden(f_1(\varsigma),\dots,f_{\f{arity}(\tf P)}(\varsigma))
\end{array}
$$
\end{defn}

\begin{thrm}
\label{thrm.functional.model}
$(|\mathcal N|^\bullet,\{\lbot,\ltop\}^\bullet,\text{-}^\iden)$ 
determines a model in the sense of Def.~\ref{defn.model}.
\end{thrm}
\begin{proof}
$|\mathcal N|^\bullet$ is a termlike substitution algebra by Corollary~\ref{corr.termlike.sub.alg}.
$\{\lbot,\ltop\}^\bullet$ is a nominal Boolean algebra over $|\mathcal N|^\bullet$ by Proposition~\ref{prop.standard.nom.bool}.
The commutation conditions \rulefont{Comm\tf f} and \rulefont{Comm\tf P} are by easy calculations.
\end{proof}

\begin{corr}
\label{corr.nom.complete}
If $\Phi\not\cent\Psi$, then there exists a termlike substitution algebra $\ns U$ and a nominal Boolean algebra $\ns M$ over $\ns U$ such that $\bigwedge_{\phi\in\Phi}\model{\phi}\not\leq\bigvee_{\psi\in\Psi}\model{\psi}$.

As a corollary, if $\Phi$ is inconsistent (that is, if $\Phi\cent\lbot$) then $\Phi$ has no model.
\end{corr}
\begin{proof}
By completeness of first-order logic there exists some ordinary model $\mathcal N$ (Definition~\ref{defn.model.standard}) and valuation $\varsigma$ to $\mathcal N$ such that 
$\bigwedge_{\phi\in\Phi}\ndenot{\varsigma}{\phi}=\ltop$ and $\bigvee_{\hspace{-.6ex}\psi\in\Psi}\ndenot{\varsigma}{\psi}=\lbot$.
It follows from Definition~\ref{defn.leq.nsN} that 
$\bigwedge_{\phi\in\Phi}\model{\phi}\not\leq\bigvee_{\hspace{-.6ex}\psi\in\Psi}\model{\psi}$.
\end{proof} 

It is also possible give a direct proof of completeness of first-order logic with respect to the models of Definition~\ref{defn.model}, using a model based on syntax quotiented by derivable equality.
The advantage of the proof of completeness used here is that it also shows that completeness does \emph{not} hold just because the only models available are syntax-based.
Topological models \cite{gabbay:stodfo} and set-theoretic models \cite{gabbay:stusun} also exist (those papers predate the consideration of lattices and $A\#$fresh limits of this paper).

\section{Conclusions}

The usual Tarski semantics, based on valuations, has names in the syntax but not in the semantics.
Names can be used to name semantic objects but these objects do not themselves contain names.
We have challenged the prejudice that only syntax contains names, and proposed a framework where names are (also) in denotation.

This is related to a broad debate in philosophy about to what extent names are real. 
One investigation in that literature deserves special mention: Kit Fine's notion of \emph{arbitrary object} \cite{fine:reaao}, which corresponds to our notion of `names in denotation' and in particular to our notion of nominal substitution set, though the technical details of Fine's constructions are quite different from ours.\footnote{Fine did not have permutations or finite support independently of and preceding the notion of substitution.}

This paper is part of a broader research context.
For instance in \cite{gabbay:stodfo} we built a representation theorem for first-order logic (in retrospect this also provides a class of `topological' models for the axioms in this paper, distinct from the `lifted' models of Section~\ref{sect.complete}).
In \cite{gabbay:stusun} we show that, remarkably, a large subclass of the cumulative hierarchy model of Fraenkel-Mostowski sets forms a substitution algebra.

In this paper we apply the technology to 
the theory of lattices, in particular identifying $A\#$fresh limits as the single unifying notion of greatest lower bound needed to model both conjunction and universal quantification.
This is another step towards a `nominal' view of logic and computation whereby names, and associated structures, are treated as integral to denotation.

\subsubsection*{Related and future work}

A nominal style semantics related to the semantics of this paper has been given for the $\lambda$-calculus \cite{gabbay:nomhss}.
Nominal lattices have not been considered to date.
The author and collaborators considered a notion of \emph{Banona} in \cite{gabbay:stodnb} (Boolean algebra with $\new$) which is tangentially related to this paper.
Domain theory in nominal sets is considered in \cite{turner:thesis} and a notion of \emph{FM category} is considered in \cite{clouston:nomlt,clouston:equlnb}.
The notion of $A\#$fresh limit seems to be new though, as is the specifically lattice-theoretic treatment.

A relative of nominal lattices is cylindric set algebras \cite{monk:intcsa,henkin:cyla}, a semantics expressed in terms of relations between elements of the domain of the model, where quantification is interpreted using projection and cylindrification. 
Instead of designing semantics for first-order logic based on relations, projection, and cylindrification, we have taken the novel approach of giving a direct interpretation to variables in the model.
In fact, the `ordinary models' we build in Subsection~\ref{subsect.standard} correspond to locally finite dimensional regular cylindric set algebras \cite{monk:intcsa}.
There is no non-nominal precedent to the nominal Boolean algebras from Definition~\ref{defn.nom.bool.alg}.

There are two natural avenues for future work:

\paragraph{Add binders to the term-language.}

Part of the challenge of this is to develop a suitable new syntax extending first-order logic with term-formers that can bind. 
Once we have this, it is easy to interpret binders, because our (nominal) denotation supports atoms-abstraction as primitive.\footnote{Atoms-abstraction $[a]x$ is the native (non-functional) notion of abstraction in nominal techniques.  It was introduced in \cite{gabbay:newaas-jv}.}

In particular, we see this being applied computationally in formalising mathematics, where binders are everywhere, as an improvement over first-order logic (where term-formers cannot bind) or higher-order logic (which may already be unnecessarily powerful).

This would be related to the \emph{permissive-nominal logic} of \cite{gabbay:pernl,gabbay:pernl-jv} except that here, we would want to take substitution as primitive.
This is related to work by Fiore and his students \cite{fiore:abssvb,fiore:teresl} which tries to develop (purely in category theory) a unified account of names, binding, and substitution; a non-exhaustive but accessible introduction to these ideas is in \cite{power:absssb}. 
Slightly further afield, Beeson's \emph{lambda-logic} \cite{beeson:laml} enriches first-order logic with a term-language that is specifically the untyped $\lambda$-calculus; the denotational assumptions used are completely different, but the spirit of allowing binding term-formers in a first-order logic is similar. 
Of course this also goes back to the first author's \emph{Binding Logic} \cite{dowek:binl}.

\paragraph{Generalise to categories.}

Another obvious next step is to generalise the constructions in this paper from lattices to categories, and so obtain an enriched category theory.
By this view, we note that lattices are just a special case of categories, and our $A\#$fresh limits are just limits in a subcategory of elements of the lattice for which $A$ is fresh.
Much can be done here.
For instance, we can try to identify the correct common generalisation of this new view of quantifiers and the standard one based on left- and right-adjoints to projection \cite[Section~5.5]{PittsAM:catl}.

Slightly less ambitiously, we can simply consider a general notion of \emph{nominal lattice}, which enriches ordinary lattices with $\lall{a}x$ (a greatest $a\#$lower bound for $x$) but without necessarily assuming a substitution action.\footnote{The $\hnu a.x$ of \cite{gabbay:stodnb} is not an instance of this because it commutes with negation, but the (dual of) $\nu a.x$ from \cite{gabbay:frenrs} is an instance of this. There are two distinct `restriction' operations here.}
Clearly, there is a rich theory of nominal preorders which remains to be explored.

\vspace{-1em}

\noindent {\footnotesize \emph{Acknowledgements.}
The second author acknowledges the support of the Leverhulme trust.
}

\end{document}